\newcommand{\fpt}{{{FPT}}}
\newcommand{\np}{NP}
\newcommand{\nph}{{\np}-hard}
\newcommand{\nphns}{{{\np}-hardness}}
\newcommand{\npc}{{{\np}-complete}}
\newcommand{\npcns}{{{\np}-completeness}}
\newcommand{\wa}{{{W}[1]}}
\newcommand{\wah}{{{W}[1]-hard}}
\newcommand{\wahns}{{{W}[1]-hardness}}
\newcommand{\wb}{{{W}[2]}}
\newcommand{\wbhns}{{{W}[2]-hardness}}
\newcommand{\wbh}{{{W}[2]-hard}}
\newcommand{\xp}{XP}
\newcommand{\wi}{{{W}[$i$]}}
\newcommand{\wih}{{{W}[$i$]-hard}}
\newcommand{\poly}{{{P}}}
\newcommand{\kpeak}{$\mathpzc{k}$}
\newcommand{\mathkpeak}{\mathpzc{k}}
\newcommand{\mathindependentsetdthree}{{\kappa}}
\newcommand{\discandi}{${p}$} 
\newcommand{\mathdiscandi}{{p}}
\newcommand{\mathessize}{{R}}
\newcommand{\appc}{{$r$}} 
\newcommand{\mathappc}{r}
\newcommand{\ccav}{{CCAV}}
\newcommand{\myfig}[1]{Figure~\ref{#1}}
\newcommand{\yes}{Yes}
\newcommand{\no}{No}
\newcommand{\yesins}{{\yes}-instance}
\newcommand{\noins}{{\no}-instance}
\newcommand{\EP}[3]{
\begin{center}
{\small
\begin{tabularx}{0.98\columnwidth}{ll}
\toprule
\multicolumn{2}{c}{\textsc{#1}} \\
\midrule
{\bf Input:}   & \parbox[t]{0.85\columnwidth}{#2\vspace*{1mm}}  \\
{\bf Question:}& \parbox[t]{0.85\columnwidth}{#3\vspace*{.5mm}} \\
\bottomrule
\end{tabularx}
}
\end{center}
}
\newcommand{\onlyfull}[1]{}
\newtheorem{theorem}{Theorem}
\newtheorem{lemma}[theorem]{Lemma}
\newtheorem{observation}{Observation}
\newenvironment{proof}{$\;$\newline \noindent {\sc Proof.}$\;\;\;$\rm}{\qed}
\def\boxit#1{\vbox{\hrule\hbox{\vrule\kern4pt
  \vbox{\kern1pt#1\kern1pt}
\kern2pt\vrule}\hrule}}
\DeclareMathAlphabet{\mathpzc}{OT1}{pzc}{m}{it}
\begin{document}
\begin{frontmatter}
\title{The Control Complexity of {{\appc}}-Approval: from the Single-Peaked Case to the General Case\tnoteref{t1}}
\tnotetext[t1]{A preliminary version of this paper appeared in the proceedings of the international conference AAMAS 2014~\cite{Yangaamas14b}, and informal proceedings of the international workshop M-PREF 2013.}

\author[yyj]{Yongjie Yang
}
\ead{yyongjiecs@gmail.com}
\address[yyj]{
Saarland University, Saarbr\"{u}cken, Germany
}

\author[jg]{Jiong Guo}
\ead{jguo@sdu.edu.cn}
\address[jg]{School of Computer Science and Technology, Shandong University, Jinan, China}

\begin{abstract}
\onlyfull{Many voting problems which are {\npc} in general elections become polynomial-time solvable when restricted to single-peaked elections. }We investigate the complexity of\onlyfull{ constructive control by adding/deleting votes/candidates for} $r$-Approval control problems in {\kpeak}-peaked elections, where at most {\kpeak} peaks are allowed in each vote with respect to an order of the candidates.\onlyfull{ Hence, $1$-peaked elections are exactly single-peaked elections.
In particular, we show that in $2$-peaked elections constructive control by adding votes for $r$-Approval is polynomial-time solvable if $r$ is a constant, but becomes {\npc} if $r$ is part of the input. In {\kpeak}-peaked elections where $\mathkpeak\geq 3$, we prove that the problem is {\nph} for every constant $r\geq 4$. In addition, we prove that constructive control by deleting votes for $r$-approval in {\kpeak}-peaked elections is {\npc} for every constant $r\geq 3$ and $\mathkpeak\geq 2$.} We show that most {\npcns} results in general elections also hold in {\kpeak}-peaked elections even for $\mathkpeak=2,3$. On the other hand, we derive polynomial-time algorithms for some problems for $\mathkpeak=2$. All our {\npcns} results apply to Approval and sincere-strategy preference-based Approval as well. Our study leads to many dichotomy results for the problems considered in this paper,
 with respect to the values of {\kpeak} and $r$. In addition, we study $r$-Approval control problems from the viewpoint of parameterized complexity and achieve both fixed-parameter tractability results and {\wahns} results, with respect to the solution size.\onlyfull{ Concretely, we prove that constructive control by adding/deleting votes for $r$-approval is fixed-parameter tractable for $r$ being a constant, even in general elections. On the other hand, constructive control by deleting candidates for $1$-approval in $3$-peaked elections turns out to be {\wah} with respect to the solution size.}
Along the way exploring the complexity of control problems, we obtain two byproducts which are of independent interest. First, we prove that every graph of maximum degree 3 admits a specific 2-interval representation where every 2-interval corresponding to a vertex contains a trivial interval (a single point) and, moreover, 2-intervals may only intersect at the endpoints of the intervals. Second, we develop a fixed-parameter tractable algorithm for a generalized {\appc}-{\sc{Set Packing}} problem with respect to the solution size, where each element in the given universal set is allowed to occur in more than one  {\appc}-subset in the solution.
\end{abstract}

\begin{keyword}
single-peaked elections, multipeaked elections, approval voting, control, parameterized complexity
\end{keyword}
\end{frontmatter}

\section{Introduction}
Voting is a common method for preference aggregation and collective decision-making, and has applications in many areas such as political elections, multi-agent systems and recommender systems~\cite{Egan2014,DBLP:journals/cj/PittKSA06,DBLP:conf/hci/Popescu13b,DBLP:conf/atal/Xia13}.
However, by Arrow's impossibility theorem~\cite{arrow50} there is no voting system which satisfies a certain set of desirable criteria when
more than two candidates are involved (see~\cite{arrow50} for further details).
One prominent way to bypass Arrow's impossibility theorem is to restrict the domain of preferences, for instance, the single-peaked domain introduced by Black~\cite{Black48}.
Intuitively, in a single-peaked election, one can order the candidates from left to right such that every voter's preference increases first and then decreases after some point as the candidates are considered from left to right.  See {\myfig{fig:sp}} for an example.

\begin{figure}[h!]
\begin{center}
\includegraphics[width=\textwidth]{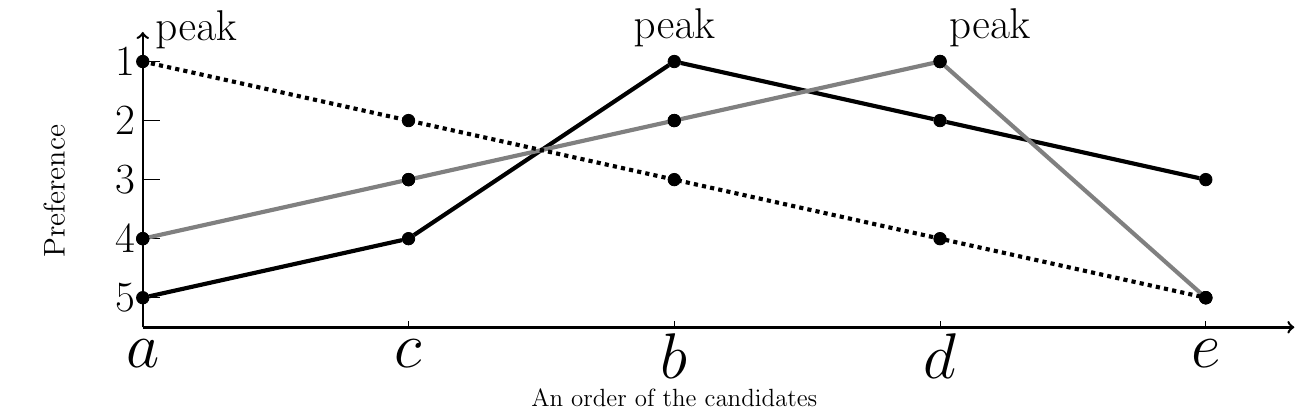}
\caption[A single-peaked election]{An illustration of a single-peaked election. There are five candidates $a,b,c,d,e$ and three voters, whose preferences are illustrated by the dark line, the gray line and the dotted line, respectively. For instance, the preference illustrated by the dark line signifies that $b$ is preferred to $d$ to $e$ to $c$ to $a$.}
\label{fig:sp}
\end{center}
\end{figure}

Recently, the complexity of various voting \mbox{problems} in single-peaked elections has been \mbox{attracting} \mbox{attention} of many researchers both from theoretical computer science and social choice communities~\cite{BrandtBHH2015JAIRbypassingsinglepeakelectionAAAI10,DBLP:conf/ecai/EscoffierLO08,DBLP:journals/iandc/FaliszewskiHHR11,sean09,DBLP:conf/aaai/Walsh07}.
It turned out that many voting problems which are {\npc} in general elections become polynomial-time solvable when restricted to single-peaked elections.
However, most elections in practice are not purely single-peaked, which \mbox{motivates} researchers to study more general models of elections. We refer readers to~\cite{DBLP:journals/mss/BredereckCW16,DBLP:conf/ijcai/CornazGS13,Demangesinglepeaked82,Erdelyi2017,DBLP:journals/ai/FaliszewskiHH14,AAMAS15Yangmanipulationspwidth,Yangaamas14a} for some generalizations and variants of single-peaked elections.
In this paper, we consider a natural generalization of single-peaked elections, called {\kpeak}-peaked elections, where each vote is allowed to have at most {\kpeak} peaks with respect to an order of the candidates.
This generalization might be relevant for many real-world applications.
For example, consider a group of people who are willing to select a special day for an event.
In this setting, each voter may have several special days which he/she prefers for some reason,
and the longer the other days away from these favorite days, the less they are preferred by the voter.
 {\kpeak}-peaked elections with {\kpeak} being a small constant may also arise in the scenario where initially the election is single-peaked and then some voters are bribed and rank some specific candidates higher in order to get some extra benefits (e.g., money, permission, etc.) from the bribers. In addition, {\kpeak}-peaked elections also play an important role in politics~\cite{Cooter2002,Egan2014}. We refer to the work of Egan \cite{Egan2014} for a detailed discussion of how and when {\kpeak}-peaked political elections arise in real-world political settings. Very recently, {\kpeak}-peaked elections have been also studied in the context of facility location problems~\cite{FilosAAAI15LZ}.

In this paper, we are concerned with control problems for {\appc}-Approval restricted to {\kpeak}-peaked elections. In a control problem, there is an external agent (e.g., the chairman in an election) who is willing to influence  the election result by carrying out some strategic behavior. There could be two goals that the external agent wants to reach. One goal is to make some distinguished candidate win the election, and the other goal is to make the distinguished candidate lose the election. A control problem with the former goal is referred to as a {\it{constructive control}} and with the latter goal as a {\it{destructive control}} in the literature~\cite{Bartholdi92howhard,DBLP:journals/ai/HemaspaandraHR07}. Moreover, the strategic behavior may involve adding/deleting a limited number of votes/candidates. We refer readers to~\cite{Bartholdi92howhard,DBLP:journals/mlq/ErdelyiNR09,DBLP:journals/ai/HemaspaandraHR07} for further discussions on control problems. In this paper, we study only constructive control. Hereinafter, ``control'' means ``constructive control''.

Approval is one of the most famous voting {systems} and has been extensively studied both in theory and in practice.
In {\it{Approval}}, we are {given} a set $\mathcal{C}$ of {\textit{candidates}} and a set $\mathcal{V}$ of {\textit{voters}},
each of whom {approves} or disapproves each candidate $c\in\mathcal{C}$. The candidates with the most approvals are {\it{winners}}.
{\it{{\appc}-Approval}} is a variant of {Approval} where {\appc} is a positive integer. In particular, in {\appc}-approval each voter $v$ casts a vote $\pi_v$ defined as a {linear} order
over the candidates and approves exactly the top-{\appc} candidates. 1-Approval is often referred to as {\it{Plurality}} in the literature~\cite{AAMAS15AzizGGMMW,Bartholdi92howhard}. Another {prominent} variant of Approval is the {\it{sincere-strategy preference-based Approval}}~(SP-AV for short), proposed by Brams and Sanver~\cite{Brams76}.
In SP-AV, each voter provides both a linear order of the candidates and a subset $C$ of candidates such that the candidates are approved according to $C$, and the ``admissible'' and ``sincere'' properties should be fulfilled. In particular, for each voter $v$, the set $C$ of approved candidates should contain the top-$r_v$ candidates (according to the linear order) for some integer $r_v$. We refer to~\cite{Brams76,DBLP:journals/mlq/ErdelyiNR09} for the precise definition of SP-AV.

\begin{table}[h!]
\begin{center}
\renewcommand\arraystretch{1.5}
\begin{tabular}{|c|c|c|c|c|} \hline
 & \multicolumn{4}{c|}{{\kpeak}-peaked elections} \\ \cline{2-5}

 & $\mathkpeak=1$ & $\mathkpeak=2$ & $\mathkpeak\geq 3$& $\mathkpeak=\lceil m/2 \rceil$ \\ \hline\hline

 & \multirow{4}{*}{\poly $(\spadesuit)$} & {\appc} is a constant: \hspace{0mm}& $\mathappc \leq 3$: {\poly} $(\diamondsuit)$ & \hspace*{-1mm} $\mathappc \leq 3$: {\poly} $(\diamondsuit)$ \hspace*{0em}\\

adding& & \hspace*{0mm} {\bf{\poly}} (Theorem~\ref{3app}) & $\mathappc \geq 4$ {\bf{\npc}}:&\hspace*{-4mm} $\mathappc \geq 4$: {\npc} $({\diamondsuit})$\\

votes & &{\appc} is not a constant: \hspace{0mm} & (Theorem~\ref{theorem:4acav3p})& $r$ is a constant: {\bf{\fpt}}\\

 & &\hspace*{0mm}{{\bf{\npc}} (Theorem~\ref{rav2nphard})}&& w.r.t. $R$ (Theorem~\ref{fptacav})\\ \hline

 &  \multirow{3}{*}{\poly $({\spadesuit})$}  & \multicolumn{2}{c|}{}& $\mathappc\leq 2$: {\poly} $(\diamondsuit)$ \hspace*{0em}\\

deleting& &\multicolumn{2}{c|}{$\mathappc \leq 2$: {\poly} $(\diamondsuit)$} &$\mathappc \geq 3$: {\npc} $({\diamondsuit})$\\

votes & &\multicolumn{2}{c|}{$\mathappc \geq 3$: {{\bf{\npc}}} (Theorem~\ref{theorem:3ad2}) } & $r$ is a constant: {\bf{\fpt}}\\

 &&\multicolumn{2}{c|}{}& w.r.t. $R$ (Theorem~\ref{theorem:3ad2fpt})\\ \hline

adding& \multirow{2}{*}{{\poly} $(\spadesuit)$} & \multirow{2}{*}{$\mathappc\geq 1$: {{\npc}} $({\triangle})$} &\multirow{2}{*}{\hspace*{0mm} $\mathappc\geq 1$: {\npc} $({\triangle})$} & $\mathappc\geq 1$: {\wbh}\\

candidates&&&& w.r.t. $R$ $(\clubsuit)$\\ \hline

deleting & \multirow{2}{*}{{\poly} $(\spadesuit)$} & \multirow{2}{*}{$\mathappc\geq 1$: {{\npc}} $({\triangle})$} &$\mathappc\geq 1$: {\bf{\wah}}& $\mathappc\geq 1$: {\wbh}\\

candidates&&& w.r.t. $R$ (Theorem~\ref{theorem:deletingcandidates})& w.r.t. $R$ $(\blacklozenge)$\\ \hline
\end{tabular}
\caption{A summary of the complexity of {\appc}-Approval control problems.
Our results are in bold. In this table, ``{\poly}" stands for ``polynomial-time solvable'', and $R$ in an entry is the solution size in the corresponding problem. Note that $\lceil m/2\rceil$-peaked elections are general elections, where $m$ denotes the number of candidates.
All results apply to the unique-winner and the nonunique-winner models.
Moreover, all our {\npc} results apply to  both Approval and SP-AV. However, there is no ``{\appc}'' in either case.
Results marked by $\diamondsuit$ are from \protect\cite{DBLP:conf/icaart/Lin11}, by $\clubsuit$ from \protect\cite{DBLP:journals/tcs/LiuFZL09}, by $\spadesuit$ from \protect\cite{DBLP:journals/iandc/FaliszewskiHHR11}, by $\triangle$ from \protect\cite{DBLP:journals/ai/FaliszewskiHH14} and by $\blacklozenge$ from \protect\cite{DBLP:journals/tcs/BetzlerU09}.}
\label{tableresults}
\end{center}
\end{table}

Hemaspaandra, Hemaspaandra and Rothe~\cite{DBLP:journals/ai/HemaspaandraHR07} proved that control by adding/deleting votes for Approval is {\npc}.
The proofs can be adapted to show the {\np}-completeness of control by adding/deleting votes for SP-AV~\cite{DBLP:journals/mlq/ErdelyiNR09}.
Lin~\cite{DBLP:conf/icaart/Lin11} established many dichotomy results for $r$-Approval control problems with respect to the values of $r$. In particular, Lin proved that control by adding votes for $r$-Approval is {\npc} if and only if $r\geq 4$, and control by deleting votes for $r$-Approval is {\np}-complete if and only if $r\geq 3$.
As for control by modification of candidates, Approval turned out to be immune\footnote{A voting system is immune to a control problem if one cannot make a non-winning candidate a winner by performing the corresponding strategic behavior.} to control by adding candidates and polynomial-time solvable for control by deleting candidates~\cite{DBLP:journals/ai/HemaspaandraHR07}. However, control by adding/deleting candidates for {\appc}-Approval is {\npc}, even when degenerated to 1-Approval~\cite{Bartholdi92howhard}. The {\npcns} also holds for SP-AV~\cite{DBLP:journals/mlq/ErdelyiNR09}. 
Recently, Approval and {\appc}-Approval control problems have been considered in single-peaked {elections}. In particular, Faliszewski et al.~\cite{DBLP:journals/iandc/FaliszewskiHHR11} proved that control by adding/deleting votes for Approval is polynomial-time solvable in single-peaked elections\footnote{In~\cite{DBLP:journals/iandc/FaliszewskiHHR11}, an Approval election is single-peaked if there is an order of the candidates such that each voter's approved candidates are contiguous in the order.}. Moreover, control by adding/deleting candidates for 1-Approval is polynomial-time solvable in single-peaked elections~\cite{DBLP:journals/iandc/FaliszewskiHHR11}.

Motivated by the {\npcns} in the general case and the polynomial-time solvability in the single-peaked case, we study the complexity of control by adding/deleting votes/candidates for {\appc}-Approval in {\kpeak}-peaked elections, aiming at exploring the complexity border of these control problems with respect to various values of {\kpeak} and $r$. Faliszewski, Hemaspaandra and Hemaspaandra~\cite{DBLP:journals/ai/FaliszewskiHH14} studied a notion of nearly single-peaked elections which is called Swoon-SP and is a special case of 2-peaked elections. They proved that control by adding/deleting candidates for 1-Approval is {\npc} when restricted to Swoon-SP elections, implying the {\npcns} of these problems in 2-peaked elections. We complement their results by studying the adding/deleting votes counterpart. Our findings are summarized in Table~\ref{tableresults}. In particular, we show that if $r$ is a constant,  control by adding votes for {\appc}-Approval is polynomial-time solvable in 2-peaked elections, but becomes {\npc} in 3-peaked elections for every constant $r\geq 4$. If {\appc} is not a constant, we show that control by adding votes for {\appc}-Approval already becomes {\np}-complete in 2-peaked elections. In addition,  we prove that control by deleting votes for $r$-Approval is {\npc} in 2-peaked elections, even for every constant $r\geq 3$.

Apart from the above results, we present some results for {\appc}-Approval control problems with respect to their parameterized complexity.
Recently, the parameterized complexity of various voting problems has received a considerable amount of attention,
see, e.g.,~\cite{DBLP:journals/jair/BredereckFNT16,
DBLP:journals/tcs/DeyMN16,
DBLP:journals/jcss/ErdelyiFRS15,
DBLP:journals/tcs/LiuFZL09,
DBLP:journals/ipl/LiuZ10,
AAMAS15MisraNS,
DBLP:journals/tcs/SkowronYFE15,
DBLP:conf/ecai/Yang14,
AAMAS15Yangmanipulationspwidth,
DBLP:journals/jco/YangG17}.
A {\it{parameterized problem}} is a language $L \subseteq \Sigma^*\times\mathbb{N}$, where $\Sigma$ is a finite alphabet.
The first component is called the {\it{main part}} and the second component is called the {\it{parameter}} of the problem. Downey and Fellows~\cite{fellows99} established the parameterized complexity hierarchy: \[\text{\fpt}\subseteq \text{\wa}\subseteq \text{\wb}\subseteq \cdots \subseteq \text{\xp},\] where the class {\fpt} (stands for ``fixed-parameter tractable'') includes all parameterized problems which admit $O(f(\kappa)\cdot|I|^{O(1)})$-time algorithms, where $I$ is the main part, $|I|$ is the size of $I$, $\kappa$ is the parameter and $f$ can be any computable function. Such algorithms are called {\it{{\fpt}-algorithms}}. For a positive integer $i$,  a parameterized problem is {\it{\wih}} if all problems in {\wi} are {\fpt}-reducible to the problem. It is widely believed that {\wih} problems where $i\geq 1$ do not admit {\fpt}-algorithms (otherwise the parameterized complexity hierarchy collapses). Hence, {\wih} problems are referred to as fixed-parameter intractable problems in the literature.

Given two parameterized problems $Q$ and $Q'$, an {\it{{\fpt}-reduction}} from $Q$ to $Q'$ is an
algorithm that takes as input an instance $(I,\kappa)$ of $Q$ and outputs an instance $(I',\kappa')$ of $Q'$ such that
\begin{enumerate}
\item the algorithm runs in $f(\kappa)\cdot |I|^{O(1)}$ time, where $f$ is a computable function;
\item $(I,\kappa)\in Q$ if and only if $(I',\kappa')\in Q'$; and
\item $\kappa'\leq g(\kappa)$, where $g$ is a computable function.
\end{enumerate}

Liu et al.~\cite{DBLP:journals/tcs/LiuFZL09} proved that control by adding votes for Approval is {\wah} and control by deleting votes for Approval is {\wbh}, with the number of added and deleted votes as parameters, respectively.
In addition, Liu et al.~\cite{DBLP:journals/tcs/LiuFZL09} proved that control by adding candidates for 1-Approval is {\wbh}, with the number of added candidates as the parameter. Betzler and Uhlmann~\cite{DBLP:journals/tcs/BetzlerU09} complemented these results by proving that control by deleting candidates for 1-Approval is {\wbh}, with the number of deleted candidates as the parameter. The {\wbhns} reductions of control by adding/deleting candidates for 1-Approval in~\cite{DBLP:journals/tcs/LiuFZL09} and~\cite{DBLP:journals/tcs/BetzlerU09} can be extended to $r$-Approval for every constant $r\geq 2$. In this paper, we prove that control by deleting candidates for 1-Approval restricted to 3-peaked elections is {\wah} with the number of deleted candidates as the parameter. Regarding general elections (the domain of preferences is not restricted), we prove that for constant $r$, control by adding/deleting votes for $r$-Approval is {\fpt} with respect to the number of added/deleted votes.

Along the way establishing our results described above, we obtain two byproducts which are of independent interest. First, to show the {\npcns} of control by deleting votes for $r$-Approval in 2-peaked elections, we study a property of graphs of maximum degree 3 (Lemma~\ref{lemma:ids}). By and large, we prove that every graph of maximum degree 3 has a 2-interval representation (a 2-interval is a set consisting of two intervals over the real line) where every 2-interval contains one trivial interval and, moreover, every two 2-intervals may only intersect at their endpoints. Since many graph problems, such as the {\sc{Vertex Cover}} problem, the {\sc{Hamiltonian Cycle}} problem and the {\sc{Dominating Set}} problem, remain {\nph} when restricted to graphs of maximum degree 3~\cite{garey,DBLP:journals/tcs/GareyJS76,DBLP:journals/tcs/Picouleau94}, this property may be useful in developing algorithms for these problems, or proving {\nphns} of further problems restricted to graphs of maximum degree 3. Second, to show the fixed-parameter tractability of control by adding votes for $r$-Approval in general elections, we derive an {\fpt}-algorithm for a generalization of the {\nph} problem {\sc{{$r$}-Set Packing}}, in which given a universal set $X$ and a collection $U$ of {$r$}-subsets of $X$ where $r$ is a constant, one asks for a subcollection of size {\kpeak} such that no two $r$-subsets in the subcollection intersect. In the generalization, we allow each element $x\in X$ to occur in at most $f(x)$ many $r$-subsets of the desired subcollection, where $f$ is a given mapping from $X$ to positive integers. Thus, if $f(x)=1$ for every $x\in X$, we have the $r$-{\sc{Set Packing}} problem. Jia, Zhang and Chen~\cite{DBLP:journals/jal/JiaZC04setpacking} crafted an ingenious {\fpt}-algorithm for the $r$-{\sc{Set Packing}} problem, with respect to {\kpeak}.
Based on this {\fpt}-algorithm, we devise an {\fpt}-algorithm for the generalized $r$-{\sc{Set Packing}} problem, with respect to {\kpeak}.

\section{PRELIMINARIES}
We will need the following notations. Unless stated otherwise, all numerical data are integers.

{\bf{Multisets.}} A {\it{multiset}} $S=\{s_1, s_2, \dots , s_{\scriptsize{|S|}}\}$ is a generalization of a set where {\it{objects}} are allowed to appear more than once, that is, $s_i=s_j$ is allowed for $i\neq j$. An {\it{element}} of $S$ is one copy of some object. We use $s\in_{+} S$ to denote that $s$ is an element of $S$. The {\it{size}} of $S$, denoted by $|S|$, is the number of elements in $S$. For instance, the size of the multiset $\{1,1,2,2,2,2,4\}$ is $7$. For two multisets $A$ and $B$, we use $A\uplus B$ to denote the multiset containing all elements from $A$ and $B$, 
and use $A\ominus B$ to denote the multiset containing, for each object $s$ of $A$, $\max\{0, n_1-n_2\}$ copies of $s$, where $n_1$ and $n_2$ denote the numbers of copies of $s$ in $A$ and $B$, respectively. For example, for $A=\{1,1,1,2,3,3,4\}$ and $B=\{1,2,3\}$, $A\uplus B=\{1,1,1,1,2,2,3,3,3,4\}$ and $A\ominus B=\{1,1,3,4\}$. A multiset $B$ is a {\it{submultiset}} of a multiset $A$ if for every object $s$ that occurs $n$ times in $B$, $A$ contains at least $n$ copies of $s$. We use $B\sqsubseteq A$ to denote that $B$ is a submultiset of $A$.
\medskip

{\bf{{\appc}-Approval.}} Let $\mathcal{C}=\{c_1,c_2, \dots ,c_m\}$ be a set of $m$ candidates and $\mathcal{V}$ a set of voters where every $v\in \mathcal{V}$ casts a vote $\pi_v$ defined as a linear order over $\mathcal{C}$. For a vote $\pi_v$ defined as a linear order $(c_{\beta(1)}, \dots ,c_{\beta(m)})$, $\pi_v(c_{\beta(i)})$ denotes the {\it{position}} of the candidate $c_{\beta(i)}$ in $\pi_v$, i.e., $\pi_v(c_{\beta(i)})=i$, where $\beta$ is a permutation of $\{1,2, \dots ,m\}$, i.e., $\{\beta(1), \dots ,\beta(m)\}=\{1, \dots ,m\}$. The multiset of votes cast by voters in $\mathcal{V}$ is denoted by $\Pi_{\mathcal{V}}$. The tuple $(\mathcal{C},\Pi_{\mathcal{V}})$ is called an {\it{election}}. In $r$-Approval, each voter $v$ gives to each candidate $c$ one point if $\pi_v(c)\leq \mathappc$, and zero points otherwise.  For a vote $\pi_v$, let $1(v)$ denote the set of candidates who get one point and $0(v)$ the set of candidates who get zero points from $\pi_v$, i.e., $1(v)=\{c\in \mathcal{C}\mid \pi_v(c)\leq \mathappc\}$ and $0(v)=\mathcal{C}\setminus 1(v)$. For a candidate $c$, let $SC_\mathcal{V}(c)$ be the total score of $c$ from $\Pi_{\mathcal{V}}$, i.e., $SC_{\mathcal{V}}(c)=|\{\pi_v\in_+ \Pi_{\mathcal{V}}\mid c\in 1(v)\}|$. Candidates with the highest total score are  called the {\it{winners}} of $(\mathcal{C}, \Pi_{\mathcal{V}})$ with respect to $r$-Approval. If there is only one winner, we call it the {\it{unique winner}}; otherwise, we call them {\it{cowinners}}.

For a vote $\pi_v$ and a subset $C\subseteq \mathcal{C}$, let $\pi_v(C)$
be the {\textit{partial vote}} of $\pi_v$ restricted to $C$ such that in $\pi_v(C)$ the relative order between every two distinct candidates in $C$ preserves the same as
in $\pi_v$. For example, for $\pi_v=(a,b,c,d,e)$, we have $\pi_v(\{b,d,e\})=(b,d,e)$. For a multiset $\Pi$ of votes and a subset $C\subseteq \mathcal{C}$, let $\Pi(C)$ be the multiset obtained from $\Pi$ by replacing each $\pi\in_+ \Pi$ by  $\pi(C)$.
\medskip

{\bf{Single-peaked/{\kpeak}-peaked elections.}} An election~$(\mathcal{C}, \Pi_{\mathcal{V}})$ is {\textit{single-peaked}} if there is an order
$\mathcal{L}$ of $\mathcal{C}$ such that for every vote $\pi_v\in_+ \Pi_{\mathcal{V}}$ and every three candidates $a, b, c \in \mathcal{C}$ with
$a \; \mathcal{L} \; b \; \mathcal{L} \; c$ or $c \; \mathcal{L} \; b \; \mathcal{L} \; a$, $\pi_v(c)< \pi_v(b)$ implies $\pi_v(b)<\pi_v(a)$, where $a\; \mathcal{L}\; b$ means that $a$ is ordered on the left-side of $b$ in $\mathcal{L}$.
The candidate $c$ such that $\pi_v(c)=1$ is the {\it{peak}} of $\pi_v$ with respect to $\mathcal{L}$.

\begin{figure}[h!]
  \begin{center}
    \includegraphics[width=\textwidth]{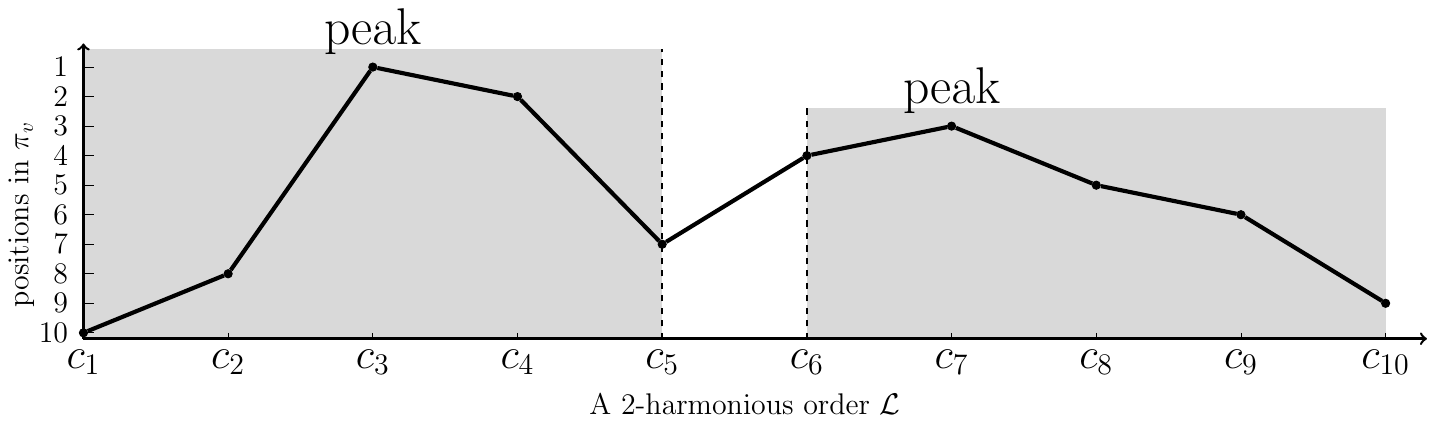}
    \end{center}
    \caption[A 2-peaked vote]{This figure shows a 2-peaked vote ${\pi_v=(c_3, c_4,c_7, c_6, c_8,c_9,c_5,c_2, c_{10}, c_1)}$ with respect to the 2-harmonious order ${\mathcal{L}=(c_1, c_2, \dots, c_{10})}$. Here,  ${\mathcal{L}}$ is {partitioned} into ${{L}_1=(c_1, c_2, c_3, c_4, c_5)}$ and ${{L}_2=(c_6, c_7, c_8, c_9, c_{10})}$.
    }
    \label{figpeakk}
\end{figure}

For an order $\mathcal{L}=(c_{1}, c_{2},\dots, c_{m})$ of $\mathcal{C}$ and a vote $\pi_v$, we say that $\pi_v$ is {\textit{{\kpeak}-peaked}} with respect to $\mathcal{L}$
if there is a ${\mathkpeak}'$-partition $L_1=(c_{1}, \dots ,c_{i_1}), L_2=(c_{i_1+1}, \dots ,c_{i_2}),\dots, L_{{\mathkpeak}'}=(c_{i_{\mathkpeak'-1}+1}, \dots ,c_{m})$ of $\mathcal{L}$ such that ${\mathkpeak}'\leq {\mathkpeak}$ and $\pi_v(\mathcal{C}(L_x))$ is single-peaked with respect to $L_x$ for all $1\leq x\leq {\mathkpeak}'$, where $\mathcal{C}(L_x)$ is the set of candidates appearing in $L_x$.
See Figure~\ref{figpeakk} 
for an example.

An election is {\it{{\kpeak}-peaked}} if there is an order $\mathcal{L}$ of $\mathcal{C}$ such that every vote in the election is {\kpeak}-peaked with respect to $\mathcal{L}$.
Here $\mathcal{L}$ is called a {\it {\kpeak}-harmonious order} of the election. Thus, 1-peaked elections are exactly single-peaked elections. Moreover, every election of $m$ candidates is an
$\lceil m/2\rceil$-peaked election.
\medskip

{\bf{Problem definitions.}} The problems  studied in this paper are defined as follows. Throughout this paper, let $p$ denote the distinguished candidate.

\EP
{
{\appc}-Approval Control by Adding Votes in {\kpeak}-Peaked Elections ({\appc}-AV-{\kpeak})
}
{
An election $(\mathcal{C}, \Pi_{\mathcal{V}})$, a distinguished candidate $p\in \mathcal{C}$, a multiset $\Pi_{\mathcal{T}}$ of unregistered votes, a {\kpeak}-harmonious order $\mathcal{L}$ such that all votes in $\Pi_{\mathcal{V}}$ and all votes in $\Pi_{\mathcal{T}}$ are {\kpeak}-peaked with respect to $\mathcal{L}$, and an integer $0\leq R\leq |\Pi_{\mathcal{T}}|$.
}
{
Is there a $\Pi_{\mathcal{T}'}\sqsubseteq \Pi_{\mathcal{T}}$ such that $|\Pi_{\mathcal{T}'}|\leq R$ and $p$ wins $(\mathcal{C},\Pi_{\mathcal{V}}\uplus \Pi_{\mathcal{T}'})$ with respect to {\appc}-Approval?
}

\EP{
{\appc}-Approval Control by Deleting Votes in {\kpeak}-Peaked Elections ({\appc}-DV-{\kpeak})
}
{An election $(\mathcal{C},\Pi_{\mathcal{V}})$, a distinguished candidate $p\in \mathcal{C}$, a
{\kpeak}-harmonious order $\mathcal{L}$ such that all votes in $\Pi_{\mathcal{V}}$ are {\kpeak}-peaked with respect to $\mathcal{L}$, and an integer $0\leq R\leq |\Pi_{\mathcal{V}}|$.
}
{Is there a $\Pi_{\mathcal{T}}\sqsubseteq \Pi_{\mathcal{V}}$ such that $|\Pi_{\mathcal{T}}|\leq R$ and $p$ wins $(\mathcal{C},\Pi_{\mathcal{V}}\ominus \Pi_{\mathcal{T}}$) with respect to {\appc}-Approval?
}

\EP{
{\appc}-Approval Control by Deleting Candidates in {\kpeak}-Peaked Elections ({\appc}-DC-{\kpeak})}
{An election $(\mathcal{C},\Pi_{\mathcal{V}})$, a distinguished candidate $p\in \mathcal{C}$, a
{\kpeak}-harmonious order $\mathcal{L}$ such that all votes in $\Pi_{\mathcal{V}}$ are {\kpeak}-peaked with respect to $\mathcal{L}$, and an integer $0\leq R\leq |\mathcal{C}|-1$.
}
{Is there a $C\subseteq \mathcal{C}\setminus\{p\}$ such that $|C|\leq R$ and $p$ wins  $(\mathcal{C}\setminus C,\Pi_{\mathcal{V}}(\mathcal{C}\setminus C))$ with respect to {\appc}-Approval?
}

We use {\appc}-AV, {\appc}-DV and {\appc}-DC to denote the above problems without the {\kpeak}-peakedness restriction (i.e., there is no {\kpeak}-harmonious order in the input), respectively. Following the convention, for each of the above problems, we distinguish between the {\it{unique-winner model}} and the {\it{nonunique-winner model}}, where in the unique-winner model winning an election means to be the unique winner, while in the nonunique-winner model winning an election means to be a winner, i.e., either the unique winner or a cowinner.
\bigskip

{\bf{Remarks.}} All our results apply to both the unique-winner model and the nonunique-winner model.
For the sake of clarity, our proofs are solely based on the unique-winner model. However, the result for the nonunique-winner model of a problem can be obtained by slightly modifying the proof for the unique-winner model of the same problem. All our {\npcns} results work for Approval and SP-AV as well.

\section{2-Peaked Elections}
In this section, we study {\appc}-Approval control problems restricted to 2-peaked elections. The following three theorems summarize our findings.

\begin{theorem}
\label{3app}
{\appc}-AV-2 is polynomial-time solvable for every constant {\appc}.
\end{theorem}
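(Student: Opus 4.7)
The plan is to reduce the question, via a polynomial-size enumeration, to a structured feasibility subproblem that exploits the interval geometry of 2-peaked votes.

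First, I would observe that (i) without loss of generality every added vote approves $p$, since otherwise the vote only helps $p$'s rivals; and (ii) since each $\pi_v \in \Pi_{\mathcal{T}}$ is 2-peaked with respect to $\mathcal{L}$, its top-$r$ approved set decomposes into at most two contiguous intervals on $\mathcal{L}$: a primary interval $I_1 \ni p$ of some size $r_1 \in \{1,\ldots,r\}$, and a (possibly empty) secondary interval $I_2 \subseteq \mathcal{L} \setminus I_1$ of size $r - r_1$. Conversely, any such pair is realisable as the approved set of a 2-peaked vote, and since only the approved set affects scoring I may identify each unregistered vote with its pair $(I_1,I_2)$.

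With $r$ constant, the primary interval has only $O(r^2)=O(1)$ distinct shapes $S$, while the secondary interval has constant length but $O(m)$ possible positions $i$. Let $n_{S,i}$ be the multiplicity in $\Pi_{\mathcal{T}}$ of the type $(S,i)$ (with $i=\bot$ for ``no secondary''). The key algorithmic step is to enumerate the integer vector $(y_S)_S$ with $y_S\in\{0,\ldots,R\}$ specifying how many added votes carry primary shape $S$; since the number of shapes is constant, there are $R^{O(1)}$ such tuples, which is polynomial. For each tuple I set $x=\sum_S y_S$, compute the primary contribution $P_c=\sum_{S:c\in S} y_S$ for every $c\neq p$, and verify the necessary bound $P_c\leq x+\delta(c)-1$, where $\delta(c)=\mathrm{score}(p)-\mathrm{score}(c)$; denote the residual slack $Q_c=x+\delta(c)-1-P_c$.

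It then remains to find integers $z_{S,i}\geq 0$ with $z_{S,i}\leq n_{S,i}$ and $\sum_i z_{S,i}=y_S$ for every shape $S$ with secondary size $s_S>0$, subject to the cross-shape capacity $\sum_{S,\,i\,:\,c\in[i,\,i+s_S-1],\,c\notin S} z_{S,i}\leq Q_c$ for every candidate $c$. For each fixed shape the constraints form a sliding-window (consecutive-$1$'s, hence totally unimodular) transportation problem on a line; the challenge, and what I expect to be the main obstacle, is the coupling via $Q_c$ across the $O(1)$ shapes, because in the combined constraint matrix every column now has two runs of $1$'s and the naive TU argument used for the single-peaked case ($k=1$) breaks. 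My plan is to model the joint problem as a min-cost integer flow on a layered network---one layer per primary shape, with $O(m)$ position nodes inside each layer and a shared candidate-capacity sink stage carrying capacity $Q_c$---and to argue integrality by combining the per-layer sliding-window (TU) structure with the constancy of the number of layers. The algorithm answers YES iff some tuple $(y_S)$ admits a feasible residual assignment, and the total running time is polynomial.
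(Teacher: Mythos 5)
Your setup matches the paper's: both reduce to the observation that every added vote may be assumed to approve $p$ and that the approved set of a $2$-peaked vote is the union of at most two discrete intervals on $\mathcal{L}$, and both dispose of the ``near-$p$'' part by a polynomial ($R^{O(1)}$) enumeration, since with $r$ constant there are only $O(1)$ possible approved sets containing $p$ once the far interval is factored out. The genuine gap is in your final step. After fixing the tuple $(y_S)_S$ you are left with an integer program whose columns have a run of $1$'s in the candidate-capacity rows \emph{plus} a $1$ in the cardinality row $\sum_i z_{S,i}=y_S$ of their shape, so the matrix is not an interval matrix and is not totally unimodular in general; you acknowledge this, but your proposed fix --- a layered min-cost flow whose integrality is to follow from ``per-layer TU plus a constant number of layers'' --- is not an argument. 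Enforcing a prescribed amount of flow on each of several classes of arcs is a multicommodity-type side constraint, and integrality of such joint polytopes does not follow from the number of classes being constant; as stated, the feasibility test at the heart of your algorithm is unproved.

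The gap is repairable, and the repair is essentially what the paper does. Because each secondary interval has constant length (at most $r-1$) and there are $O(1)$ shapes, the residual selection problem can be solved by a left-to-right dynamic program over $\mathcal{L}$: process the votes sorted by the right endpoint of their non-$p$ interval and keep as state the number of votes selected so far (per shape, if you keep your enumeration) together with the accumulated scores of the $O(r)$ candidates adjacent to $p$ and of the $O(r)$ candidates near the current right endpoint; all other candidates' capacities can be checked and forgotten once the sweep has passed them, since no later secondary interval can reach back more than $r-1$ positions. This gives $R^{O(1)}\cdot(|\Pi_{\mathcal{V}}|+R)^{O(1)}\cdot|\Pi_{\mathcal{T}}|$ table entries and hence a polynomial algorithm for constant $r$; the paper's table $DT(i,j,k,s_1,\dots,s_6,s_{i,1},s_{i,2},s_{i,3})$ is exactly such a sweep, with the maximum current score $k$ carried along to decide the unique-winner condition at the end. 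If you replace your flow step by such a sweep DP (or supply a genuine integrality proof for your network), your argument goes through.
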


Recall that {\appc}-AV is {\npc} for every constant \mbox{$r\geq 4$} but polynomial-time solvable when restricted to single-peaked elections~\cite{DBLP:journals/iandc/FaliszewskiHHR11}. Theorem~\ref{3app} shows that the polynomial-time solvability of {\appc}-AV remains when extending from single-peaked elections to 2-peaked elections, for {\appc} being a constant. This bound is tight as indicated by the following theorem. More precisely, if {\appc} is not a constant, {\appc}-AV becomes {\npc} in 2-peaked elections, in contrast to the polynomial-time solvability of the problem in single-peaked elections~\cite{DBLP:journals/iandc/FaliszewskiHHR11}.

\begin{theorem}\label{rav2nphard}
\label{uacav}
{\appc}-AV-2 is {\npc} if {\appc} is a part of the input.
\end{theorem}

Faliszewski et al.~\cite{DBLP:journals/iandc/FaliszewskiHHR11} proved that {\appc}-DV-1 is polynomial-time solvable, even when {\appc} is not a constant. The following theorem shows that by increasing the number of peaks only by one, this problem becomes {\npc} for every constant $r\geq 3$. Note that $r$-DV is polynomial-time solvable for $r\leq 2$~\cite{DBLP:conf/icaart/Lin11}.

\begin{theorem}\label{theorem:3ad2}
{\appc}-DV-2 is {\npc} for every constant $r\geq 3$.
\end{theorem}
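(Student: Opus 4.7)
The plan is to reduce from Vertex Cover on cubic graphs. Given an instance $(G, k)$ with $G = (V, E)$ cubic, I construct a 3-SP-AV election with candidates $p$, an edge-candidate $c_e$ for each $e \in E$, and a small number of padding candidates. The harmonious ordering $\mathcal{L}$ lists the $c_e$'s first in a specifically chosen order (described below), then $p$ followed by the padding. For each vertex $v \in V$, a vertex vote $\pi_v$ places in its top three positions the three edge-candidates for $v$'s incident edges; padding votes approving $p$ together with padding candidates are chosen so that $p$'s initial score is exactly $2$ while all padding candidates end up with score strictly less than $2$. Then every $c_e$ also has initial score $2$ (from the two vertex votes of $e$'s endpoints), and deleting $\pi_v$ lowers by one the scores of precisely the three edge-candidates for edges incident to $v$. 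The portion of each vote below its top three is then extended to a full ranking that is 2-peaked with respect to $\mathcal{L}$, which is always possible once the approved set lies in at most two blocks.

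Correctness is routine: any vertex cover of size at most $k$ yields a set of $k$ vertex-vote deletions that brings every $c_e$ down to score at most $1 < 2$, making $p$ the unique winner. Conversely, any deletion set of size at most $k$ making $p$ the unique winner can be assumed to consist only of vertex votes (deleting a padding vote would only reduce $p$'s own score), and must delete at least one endpoint's vertex vote for every edge -- i.e., it encodes a vertex cover of size at most $k$. Setting $R = k$ completes the reduction.

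The main technical ingredient is choosing the ordering of the $c_e$'s so that for every $v \in V$ the three edge-candidates incident to $v$ form at most two contiguous blocks (so that every vertex vote is 2-peaked). I plan to establish this via a decomposition of the cubic graph into a perfect matching $M$ and a 2-factor $F = C_1 \cup \ldots \cup C_q$ (guaranteed by Petersen's theorem in the bridgeless case; the bridged case is handled by a standard gadget reduction), traversing each cycle $C_j$ from a carefully chosen start vertex $v_0^j$ and interleaving the cycle's edge list with the matching edge $m_{v_0^j}$ placed immediately before the cycle in $\mathcal{L}$, with the remaining matching edges appended at the end. The start vertices are chosen so that no matching edge has both endpoints as cycle starts, a condition that can be met by a Hall-type matching argument. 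Under this ordering, every vertex's three edges form either three consecutive positions or a consecutive pair together with a separate third, hence at most two blocks.

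For $r \geq 4$ I would extend the construction by inserting $r - 3$ auxiliary ``ghost'' candidates per vertex into $\mathcal{L}$, placed immediately adjacent to that vertex's edge block and approved in the corresponding vertex vote's top $r$; the padding is increased to preserve the score alignment. I expect the main obstacle to be the combinatorial edge-ordering lemma and, for $r \geq 4$, showing that the ghost-candidate insertions do not disrupt the 2-peakedness of other vertex votes -- likely requiring a slightly refined construction that groups each vertex's ghosts on one designated side of its edge block and carefully accounts for the cascading shifts of positions in $\mathcal{L}$.
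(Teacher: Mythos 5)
Your reduction skeleton is essentially the paper's: both reduce Vertex Cover on degree-3 graphs to 3-DV-2 with one vote per vertex approving three candidates, a distinguished candidate $p$ sitting at score $2$ via two padding votes whose other approved candidates score $1$, and the correspondence ``deletion set $=$ vertex cover'' proved exactly as you sketch (including the observation that deleting $p$-approving votes is never useful). Where you genuinely diverge is the combinatorial core that makes every vertex vote 2-peaked: the paper proves a structural lemma (its Lemma~1) giving every bounded degree-3 graph a 2-interval representation in which each vertex has one trivial and one ``private'' interval, so each vertex's three endpoint-candidates contain a consecutive pair; the candidates are these endpoints (shared endpoints playing the role of your edge-candidates). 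You instead take edge-candidates directly and order them via Petersen's theorem (2-factor plus perfect matching), traversing each cycle and anchoring each start vertex's matching edge next to its cycle. Your ordering idea does work: for a non-start cycle vertex the two cycle edges are consecutive, for a start vertex the matching edge and first cycle edge are consecutive, and your start-selection condition is always satisfiable because the conflict graph is just a perfect matching while every cycle has at least three vertices (a greedy-with-exchange or Haxell-type argument with $\Delta=1$; this should be spelled out rather than waved at). This buys a shorter, more transparent ordering argument than the paper's long inductive case analysis, at the price of importing Petersen's theorem.

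There are, however, two genuine gaps. First, Petersen's theorem needs a bridgeless cubic graph, and ``the bridged case is handled by a standard gadget reduction'' is doing real work that you never do: you must either cite/prove NP-hardness of Vertex Cover on bridgeless (e.g.\ 2-edge-connected) cubic graphs or exhibit the bridge-removing gadget and verify it preserves the cover size bookkeeping; the paper avoids this entirely because its interval lemma applies to all bounded degree-3 graphs. Second, the theorem claims hardness for every constant $r\geq 3$, and your $r\geq 4$ extension is left with an acknowledged unresolved obstacle. The obstacle is real for your stated placement (putting the $r-3$ ghosts \emph{adjacent to} a vertex's edge block can land them inside a neighbouring vertex's consecutive pair, e.g.\ between $e_i$ and $e_{i+1}$), but it disappears if you instead insert each vertex's ghosts \emph{strictly inside} its own consecutive pair: the gap between $e_{i-1}$ and $e_i$ is owned by the unique common endpoint of those two edges (two edges of a simple graph share at most one vertex), so such insertions cannot break any other vote's contiguity, and ghosts are approved once, keeping their scores below $2$. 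This is exactly the role the paper's ``no other endpoint inside the non-trivial interval'' condition plays when it inserts its $r-3$ dummies. As written, then, the proposal is a promising genuinely different route for the ordering lemma, but it is not yet a complete proof of the stated theorem: the bridge issue and the $r\geq 4$ case must be closed along the lines above.
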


\subsection{Proof of Theorem~{\ref{3app}}}
We prove Theorem~\ref{3app} by deriving a polynomial-time algorithm for $r$-AV-2 based on dynamic programming. Recall that control by adding votes for {\appc}-Approval is polynomial-time solvable for $\mathappc\leq 3$ even in general elections. Hence, we need only to prove the theorem for every constant $\mathappc\geq 4$. Let $((\mathcal{C}, \Pi_{\mathcal{V}}), p \in \mathcal{C}, \Pi_{\mathcal{T}}, \mathcal{L}, R)$ be an instance of {\appc}-AV-2 where $\mathappc\geq 4$.
For $c\in \mathcal{C}$, let $\overleftarrow{c}(1)$ be the candidate lying immediately before $c$ in $\mathcal{L}$
and $\overleftarrow{c}(i)$ be the candidate lying immediately before $\overleftarrow{c}(i-1)$ in $\mathcal{L}$. Similarly, we
use $\overrightarrow{c}(1)$ and $\overrightarrow{c}(i)$ to denote the candidates lying immediately after $c$ and $\overrightarrow{c}(i-1)$, respectively.
For example, if $\mathcal{L}=(a, b, c, d, e, f, g, h)$,
then $\overrightarrow{d}(1)=e, \overrightarrow{d}(4)=h, \overleftarrow{d}(1)=c$ and $\overleftarrow{d}(3)=a$.

Given an order $A=(a_{1}, a_{2},\dots, a_{n})$,
a {\it discrete interval} $I$ over $A$ is a sub-order $(a_i,\dots, a_{i+t})$ of $A$, where $0\leq t\leq n-1$ and $1\leq i\leq n-t$.
We denote the left-most element $a_i$ by $l(I)$ and the right-most element $a_{i+t}$ by $r(I)$.
We also use $A(l(I),r(I))$ to denote $I$.
Let $\mathcal{S}(I)$ denote the set of elements appearing in $I$ and, for notational simplicity, let $|I|=|\mathcal{S}(I)|$ be the size of $I$.
For example, for a discrete interval $I=A(3,6)$ over the order $A=(2, 5, 3, 10, 4, 6, 0)$,
$\mathcal{S}(I)$ is $\{3, 4, 6, 10\}$.
A {\it{$k$}-discrete interval} over an order $A$ is a collection of $k$ disjoint discrete intervals over $A$,
where ``disjoint'' means that no element in $A$ appears in more than one of these discrete intervals.
For a $k$-discrete interval $\mathcal{I}$, let $\mathcal{S}(\mathcal{I})=\bigcup_{I\in\mathcal{I}}{\mathcal{S}(I)}$.
The following observations are useful.

\begin{observation}\label{obinter}
For each {\kpeak}-peaked election $(\mathcal{C},\Pi_{\mathcal{V}})$ associated with a {\kpeak}-harmonious order
$\mathcal{L}$ over $\mathcal{C}$, and each vote $\pi_v\in_+ \Pi_{\mathcal{V}}$, there is a $\mathkpeak'$-discrete
interval $\mathcal{I}$ over $\mathcal{L}$ such that $0<\mathkpeak'\leq \mathkpeak$ and $1(v)=\mathcal{S}(\mathcal{I})$.
\end{observation}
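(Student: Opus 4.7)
My plan is to derive the observation directly from the definition of $k$-peakedness, combined with the classical fact that in a single-peaked vote the set of top-$t$ candidates forms a consecutive interval of the harmonious ordering. The argument is structural; no optimization is needed.

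First, I would invoke the $k$-peakedness of $\pi_v$ with respect to $\mathcal{L}$ to fix the promised partition $L_1, L_2, \ldots, L_k$ of $\mathcal{L}$ into consecutive blocks such that each restricted partial vote $\pi_v(\mathcal{C}(L_i))$ is single-peaked with respect to $L_i$. For each $i$, set $t_i := |1(v) \cap \mathcal{C}(L_i)|$. I would then observe that $1(v) \cap \mathcal{C}(L_i)$ is precisely the set of top-$t_i$ candidates of the partial vote $\pi_v(\mathcal{C}(L_i))$: for any $c \in 1(v) \cap \mathcal{C}(L_i)$ and any $c' \in \mathcal{C}(L_i) \setminus 1(v)$, the position of $c$ in $\pi_v$ is at most $r$ while that of $c'$ is strictly greater than $r$, so $c \succ_v c'$, and relative order is preserved under restriction to $\mathcal{C}(L_i)$.

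Next, I would invoke the standard single-peakedness lemma: if a vote $\succ$ is single-peaked with respect to a linear ordering $L$, then for every $t$ the top-$t$ set is a consecutive sub-ordering of $L$. The one-sentence justification, which I would include explicitly, is that if some $b$ were sandwiched in $L$ strictly between two top-$t$ candidates $a$ and $c$ yet $b$ itself were not among the top-$t$, then the single-peaked condition (applied to the triple $a, b, c$ in whichever of the two directions of $L$ applies) would force $b \succ a$ or $b \succ c$, contradicting the fact that both $a$ and $c$ beat $b$ in $\succ$.

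Applying this lemma to each block $L_i$ shows that $1(v) \cap \mathcal{C}(L_i)$ is either empty or a single discrete interval over $L_i$, hence a discrete interval over $\mathcal{L}$. Collecting the non-empty ones, say $k'$ in number, yields a $k'$-discrete interval $\mathcal{I}$ with $\mathcal{S}(\mathcal{I}) = 1(v)$, and by construction $k' \leq k$; moreover $k' \geq 1$ because $|1(v)| = r \geq 1$. The only non-trivial step is the single-peaked interval lemma, which is folklore but worth stating explicitly; everything else is straightforward bookkeeping between the global vote $\pi_v$ and its restrictions to the blocks $L_i$.
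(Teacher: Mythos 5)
Your proof is correct, and it matches the intended justification: the paper states Observation~\ref{obinter} without proof, treating it as immediate from the definitions, and your argument (per-vote block decomposition $L_1,\dots,L_k$, the fact that $1(v)\cap\mathcal{C}(L_i)$ is a top segment of the restricted vote, and the folklore lemma that top-$t$ sets of single-peaked votes are consecutive in the harmonious ordering) is exactly the natural expansion of that reasoning. The bookkeeping is sound -- disjointness of the resulting intervals follows from the $L_i$ being a partition of $\mathcal{L}$, and $k'\geq 1$ from $r\geq 1$ -- so nothing is missing.
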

\vspace*{-3pt}

By Observation~\ref{obinter}, for every 2-peaked vote $\pi_v$ with respect to a 2-harmonious order $\mathcal{L}$, $1(v)$ can be represented by a $2$-discrete interval or a $1$-discrete interval over $\mathcal{L}$.
See~\myfig{fig:2pinterval} for an example.

\begin{figure}
\begin{center}
\includegraphics[width=\textwidth]{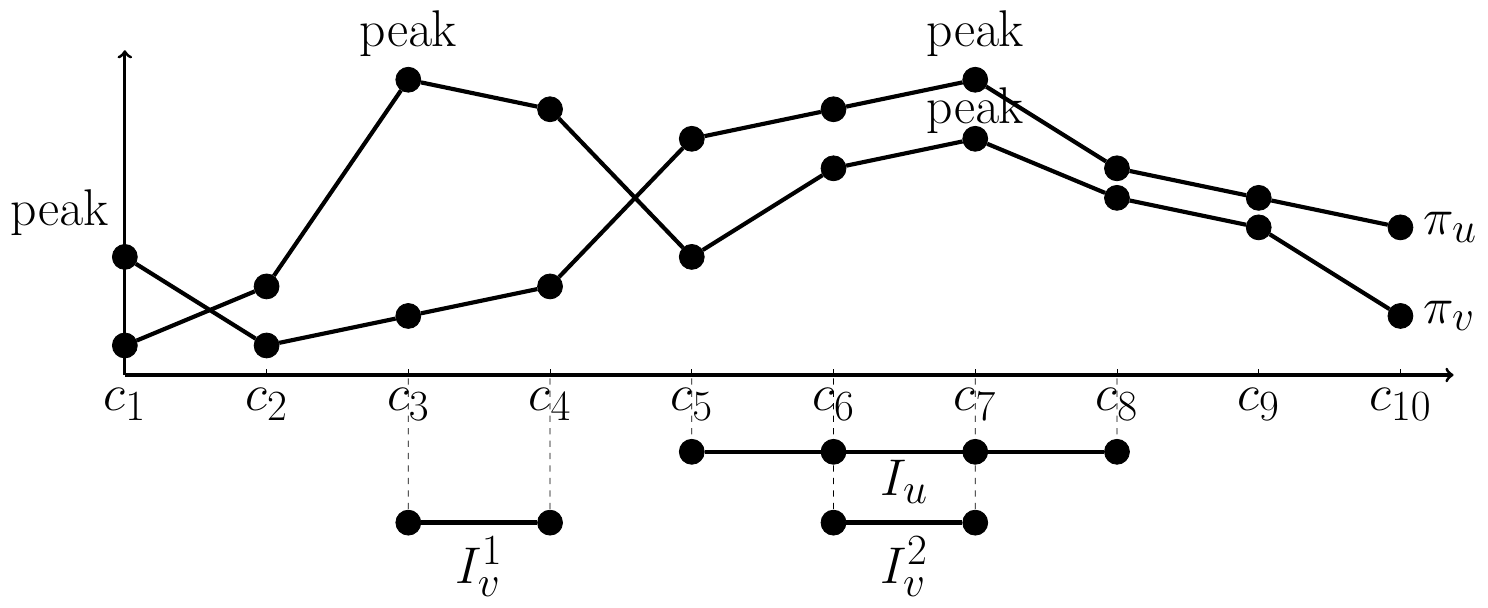}
\end{center}
\caption[Two 2-peaked votes represented by 2-discrete intervals]{This figure shows two votes ${\pi_v=(c_3, c_4, c_7, c_6, c_8, c_9, c_5, c_2, c_{10}, c_1)}$ and ${\pi_u=(c_7,c_6,c_5, c_8, c_9,c_{10}, c_1, c_4,c_3, c_2)}$. Each vote gives one point to its top-4 candidates.
${1(v)}$ is represented by a 2-discrete interval ${\{I_v^1=(c_3,c_4),I_v^2=(c_6,c_7)\}}$ and ${1(u)}$ is represented by a 1-discrete interval ${\{I_u=(c_5,c_6,c_7,c_8)\}}$.
}
\label{fig:2pinterval}
\end{figure}

\begin{observation}\label{obs3}
Every {\yesins} of {\appc}-AV has a solution where each vote approves
$p$.
\end{observation}

We first derive a polynomial-time algorithm for 4-AV-2.
It is easy to generalize the algorithm to {\appc}-AV-2 for every constant ${\mathappc}\geq 5$. Due to Observation~\ref{obs3}, we can safely assume that $p\in 1(v)$ for each $\pi_v\in_+ \Pi_{\mathcal{T}}$.
Due to Observation~\ref{obinter}, for every vote $\pi_v\in_+ \Pi_{\mathcal{T}}$, $1(v)$ can be represented by a 2-discrete interval $\mathcal{I}_v=\{I_v^{\overline{p}}, I_v^p\}$, where $I_v^p$ is the discrete interval including $p$ and $I_v^{\bar{p}}$ is the discrete interval without $p$ in it, or
a 1-discrete interval $\mathcal{I}_v=\{I_v^p\}$ where $p\in \mathcal{S}(I_v^p)$.
Let $\Pi$ be the multiset of all votes $\pi_v\in_+\Pi_{\mathcal{T}}$ where $1(v)$ is represented by a 1-discrete interval over $\mathcal{L}$. We say that two votes have the same {\it type} if they approve the same candidates. Since every voter approves exactly 4 candidates, $\Pi$ has at most four different types of votes:

(1) votes approving $\overleftarrow{p}(3), \overleftarrow{p}(2),\overleftarrow{p}(1), p$;

(2) votes approving $\overleftarrow{p}(2), \overleftarrow{p}(1), p,\overrightarrow{p}(1)$;

(3) votes approving $\overleftarrow{p}(1), p, \overrightarrow{p}(1),\overrightarrow{p}(2)$; and

(4) votes approving $p, \overrightarrow{p}(1), \overrightarrow{p}(2),\overrightarrow{p}(3)$.

Then, we enumerate all possibilities of how many votes in a potential solution are from each of the four types of votes in $\Pi$. For each possibility, we recalculate the scores of the candidates by incorporating the corresponding votes into the election, and update $R$ accordingly. We immediately discard all possibilities leading to a negative value of $R$. In addition, after recalculating the scores we remove all unregistered votes which can be represented by 1-discrete intervals. This breaks down the original instance into at most $R^4$ subinstances. It is clear that the original instance is a {\yesins} if and only if at least one of the subinstances is a {\yesins}. In the following, we show how to solve each subinstance in polynomial time. Let $\Pi_{\mathcal{T}}$ be the multiset of unregistered votes in the subinstance currently considered. As discussed above, each vote in $\Pi_{\mathcal{T}}$ is represented by a 2-discrete interval. Let
$\vec{\Pi}_{\mathcal{T}}=(\pi_{v_1}, \pi_{v_2}, \dots ,\pi_{v_{|\mathcal{T}|}})$ be an order of $\Pi_{\mathcal{T}}$ such that
$r(I_{v_i}^{\overline{p}}) = r(I_{v_j}^{\overline{p}})$ or $r(I_{v_i}^{\overline{p}}) \; \mathcal{L} \; r(I_{v_j}^{\overline{p}})$ for all
$1\leq i<j\leq |\Pi_{\mathcal{T}}|$.
Our dynamic programming algorithm uses
a binary table \[DT(i,j,s,s_1,s_2,s_3,s_4,s_5,s_6,s_{i,1},s_{i,2},s_{i,3}),\]
where $i,j,s,s_1, \dots ,s_6,s_{i,1},s_{i,2},s_{i,3}$ are non-negative integers and $DT(i,j,s,s_1, \dots ,s_6,s_{i,1},s_{i,2},s_{i,3})$ $=1$
if there is a submultiset $\Pi_{\mathcal{T}'}\sqsubseteq \{\pi_{v_1}, \pi_{v_2}, \dots, \pi_{v_i}\}$ such that
\begin{enumerate}
\item $|\Pi_{\mathcal{T}'}|=j$;
\item $\pi_{v_i}\in_+ \Pi_{\mathcal{T}'}$;
\item $\max \{SC_{\mathcal{V}\cup \mathcal{T}'}(c)\mid c\in \mathcal{C}\setminus \{p\}\}=s$;
\item $SC_{\mathcal{V}\cup \mathcal{T}'}(c_{t})=s_t$ for all $1\leq t\leq 6$, where $c_{3}=\overleftarrow{p}(1), c_{2}=\overleftarrow{p}(2), c_{1}=\overleftarrow{p}(3), c_{4}=\overrightarrow{p}(1), c_{5}=\overrightarrow{p}(2)$ and $c_{6}=\overrightarrow{p}(3)$; and
\item $SC_{\mathcal{V}\cup \mathcal{T}'}(c_{i,t})=s_{i,t}$ for all $t\in \{1,2,3\}$, where $c_{i,1}=r(I_{v_i}^{\overline{p}}), c_{i,2}=\overleftarrow{c_{i,1}}(1)$ and
$c_{i,3}=\overleftarrow{c_{i,1}}(2)$. See \myfig{fig:mapping} for an illustration of (4) and (5).
\end{enumerate}

\begin{figure}[h!]
\begin{center}
\includegraphics[width=\textwidth]{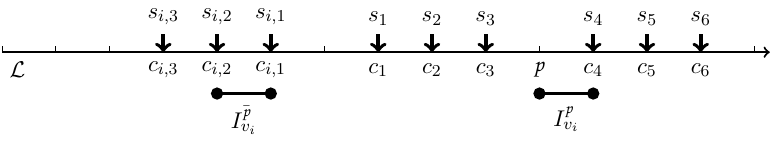}
\caption[Dynamic table for {\ccav}-{\appc}-Approval in 2-peaked elections]{Illustration of (4) and (5) in the definition of the dynamic table ${DT}$ in the proof of Theorem~\ref{3app}.}
\label{fig:mapping}
\end{center}
\end{figure}

It is easy to see that the subinstance is a {\yesins} if and only if \[DT(n,R',s,s_1,s_2,\dots,s_6,s_{n,1},s_{n,2},s_{n,3})=1\] for some $n\leq |\Pi_{\mathcal{T}}|$, $R'\leq R$, $s\leq SC_\mathcal{V}(p)+R'-1$ and $s'\leq s$ for all $s'\in \{s_1,s_2, \dots ,s_6,s_{n,1},s_{n,2},s_{n,3}\}$. Therefore, to solve the subinstance we need to calculate
the values of $DT(i,j,s,s_1,s_2,\dots,s_6,s_{i,1},s_{i,2},s_{i,3})$ for all $0\leq j\leq R$, $j\leq i\leq |\Pi_{\mathcal{T}}|$,
$1\leq s\leq SC_{\mathcal{V}}(p)+R-1$ and $s'\leq s$ for all $s'\in\{s_1, \dots ,s_6,s_{i,1},s_{i,2},s_{i,3}\}$.
As a result, we have at most $|\mathcal{T}|\cdot R\cdot(|\mathcal{V}|+R)^{10}$ entries to calculate.

We use the following recurrence relation to update the table: $DT(i, j, s, s_1, \dots, s_6, s_{i,1}, s_{i,2}, s_{i,3})=1$ if at least one of the following cases applies:

{Case 1.}
$\exists DT(i_1,j-1,s,s'_1,s'_2,\dots,s'_6,s'_{i_1,1},s'_{i_1,2},s'_{i_1,3})=1$ such that Conditions (1)-(4) hold.

{Case 2.} $\exists s'\in \{s_1, \dots ,s_6,s_{i,1}, \dots ,s_{i,3}\}$ with $s'=s$ and $\exists DT(i_1,j-1,s-1,s'_1, \dots ,s'_6,s'_{i_1,1}, \dots , s'_{i_1,3})=1$ such that Conditions (1)-(4) hold.

The four conditions are:
\begin{enumerate}
\item $j-1\leq i_1\leq i-1$;
\item $s_t=s'_t+SC_{\{v_i\}}(c_{t})$ for all $1\leq t\leq 6$;
\item $s_{i,t}=s'_{i_1,t_1}+SC_{\{v_i\}}(c_{i,t})$ for all $c_{i,t}=c_{i_1,t_1}$; and
\item $s_{i,t}=SC_{\mathcal{V}\cup \{v_i\}}(c_{i,t})$ for all $c_{i,t}\in \{{r(I^{\bar{p}}_{v_i})},\overleftarrow{r(I^{\bar{p}}_{v_i})}(1),$ $ \overleftarrow{r(I^{\bar{p}}_{v_i})}(2)\}\setminus \{{r(I^{\bar{p}}_{v_{i_{1}}})},\overleftarrow{r(I^{\bar{p}}_{v_{i_{1}}})}(1),$ $\overleftarrow{r(I^{\bar{p}}_{v_{i_{1}}})}(2)\}$.
\end{enumerate}

The above algorithm can be adapted to solve the nonunique-winner model: replacing all appearances of ``$SC_{\mathcal{V}}({\mathdiscandi})+\mathessize-1$" in the above description with ``$SC_{\mathcal{V}}({\mathdiscandi})+\mathessize$".

The algorithm can be easily generalized to solve {\appc}-AV-2 for every constant $\mathappc \geq 4$ by using a bigger but
still polynomial-sized table. In particular, for each fixed {\appc}, we need a $3{\mathappc}$-dimensional table \[DT(i, j, s, s_1, \dots , s_{2(\mathappc-1)}, s_{i,1}, \dots, s_{i,{\mathappc-1}}),\] where $i, j, s$ take the same meanings as in the above algorithm, $s_1, \dots ,s_{2(\mathappc-1)}$ maintain the scores of the $2(\mathappc-1)$ candidates around the distinguished candidate {\discandi} (precisely, we maintain the scores of the ${\mathappc}-1$ candidates immediately lying on the left side of {\discandi}, and the scores of the ${\mathappc}-1$ candidates immediately lying on the right side of {\discandi} in the 2-harmonious order. If there are less than $r-1$ candidates lying on the left- or right-side of {\discandi}, we reduce the dimension of the table accordingly), and $s_{i,1}, \dots ,s_{i,{\mathappc-1}}$ maintain the scores of the candidate $r(I_{v_i}^{\overline{\mathdiscandi}})$ and the $\mathappc-2$ candidates consecutively lying on the left side of $r(I_{v_i}^{\overline{\mathdiscandi}})$  in the 2-harmonious order.

\subsection{Proof of Theorem~\ref{rav2nphard}}
We prove Theorem~\ref{rav2nphard} by a reduction from a variant of the {\sc{Independent Set}} problem which is {\nph}~\cite{Keil99}. It is clear that $r$-AV-2 is in {\np}. It remains to prove the {\nphns}.

Let $(\;)$ denote an empty order containing no element.
For a linear order $A=(a_1, a_2, \dots, a_n)$, 
let $A[a_i, a_j]$ (resp. $A(a_i, a_j]$, $A[a_i, a_j)$ and $A(a_i, a_j)$) with $1\leq i\leq j\leq n$ be the sub-order $(a_i, a_{i+1}, \dots, a_j)$~(resp. $(a_{i+1}, a_{i+2}, \dots, a_j)$ if $i<j$ and $(\;)$ if $i=j$, $(a_i, a_{i+1}, \dots, a_{j-1})$ if $i<j$ and $(\;)$ if $i=j$, and $(a_{i+1}, a_{i+2}, \dots, a_{j-1})$ if $i<j-1$ and $(\;)$ if $j\geq i\geq j-1$), and let $A[a_j, a_i]$ (resp. $A[a_j, a_i)$, $A(a_j, a_i]$ and $A(a_j, a_i)$)
be the reversal of $A[a_i, a_j]$ (resp. $A(a_i, a_j]$, $A[a_i, a_j)$ and $A(a_i, a_j)$).
For two linear orders $A=(a_1, a_2, \dots, a_n)$ and $B=(b_1, b_2, \dots, b_m)$ without common elements, let $(A,B)$ be the linear order $(a_1, a_2, \dots, a_n, b_1, b_2, \dots, b_m)$. Let $[n]$ be the set $\{1, 2, \dots, n\}$.

\begin{quote}
\noindent{A Variant of Independent Set~(VIS)}\\[1mm]
\noindent{\it Input:} A multiset $\mathcal{T}=\{T_1,T_2, \dots ,T_n\}$ where each $T_i\in_+ \mathcal{T}$
is a set of discrete intervals of size 4 over $(1,2, \dots ,12n)$ and $|T_i|\leq 3$ for all $T_i\in_+ \mathcal{T}$.\\
\noindent{\it Question:} Is there a set $S\subseteq \bigcup_{T\in_+ \mathcal{T}}{T}$ of discrete intervals such that
$|S|=n$, $|S\cap T_i|=1$ for every $T_i\in_+ \mathcal{T}$ and no two discrete intervals in $S$ intersect?
\end{quote}

Given an instance $\mathcal{E}=(\mathcal{T}=\{T_1, T_2, \dots, T_n\})$ of VIS, we construct an instance $\mathcal{E}'=((\mathcal{C}, \Pi_{\mathcal{V}}), p\in \mathcal{C}, \Pi_{\mathcal{T}}, \mathcal{L}, R=n)$
for {\appc}-AV-2 as follows.

Let $\mathcal{I}=\bigcup_{T\in_+ \mathcal{T}}T$.
Let $\Gamma$ be the set of all elements appearing in some discrete interval of $\mathcal{I}$, i.e., $\Gamma=\bigcup_{I\in \mathcal{I}}\mathcal{S}({I})$.
Let $\vec{\Gamma}=(x_1, x_2, \dots, x_{|\Gamma|})$ be the order of $\Gamma$ such that $x_i< x_{i+1}$ for all $i\in [|\Gamma|-1]$.

{\bf{Candidates:}} We create three disjoint subsets of candidates $C$, $D$ and $E$ as follows:
(1) $C=\Gamma$; 
(2) $D$ contains exactly $2n-1$ candidates denoted by $d_1, \dots, d_n, \dots, d_{2n-1}$;
(3) $E$ contains exactly $(n+3)\cdot (|C|+|D|-1)$ dummy candidates denoted by $x_1', x_2', \dots , x_{|C|\cdot (n+3)}', d_1', d_2', \dots , d_{(n+3)\cdot (|D|-1)}'$ which can never be winners no matter which up to $R$ unregistered votes are added. Hence, $\mathcal{C}=C\cup D\cup E$. The distinguished candidate is $d_n$, i.e., $p=d_n$. Moreover, $\mathappc=n+4$.

{\bf{2-Harmonious Order:}} Let $\vec{D}=(d_1, d_2, \dots ,d_{2n-1})$ and $\vec{E}=(x_1', \dots ,x_{|C|\cdot (n+3)}', d_1', \dots ,d_{(|D|-1)\cdot (n+3)}')$. Then, the 2-harmonious order $\mathcal{L}$ is $(\vec{\Gamma},\vec{D}, \vec{E})$.

{\bf{Registered Votes $\Pi_{\mathcal{V}}$:}} We create the following registered votes:

(1) for each $x_i\in C$, create $n-2$ votes defined as \[(x_i,\mathcal{L}[x_{(n+3)i-n-2}',x_{i(n+3)}'],\mathcal{L}(x_i,x_1],\mathcal{L}(x_i,x_{(n+3)i-n-2}'), \mathcal{L}(x_{i(n+3)}',d_{(|D|-1)\cdot (n+3)}']);\]

(2) for each $d_i\in D$ where $i\in [n-1]$, create $n-(i+1)$ votes defined as \[(d_i,\mathcal{L}[d_{(n+3)i-n-2}',d_{i(n+3)}'],\mathcal{L}(d_i,x_1],\mathcal{L}(d_i,d_{(n+3)i-n-2}'), \mathcal{L}(d_{i(n+3)}',d_{(|D|-1)\cdot (n+3)}']);\]

(3) for each $d_i\in D$ where $i\in \{n+1, n+2, \dots, 2n-1\}$, create $i-(n+1)$ votes defined as \[(d_i,\mathcal{L}[d_{(n+3)i-2n-5}',d_{(i-1)\cdot (n+3)}'],\mathcal{L}(d_i,x_1], \mathcal{L}(d_i,d_{(n+3)i-2n-5}'), \mathcal{L}(d_{(i-1)\cdot (n+3)}',d_{(|D|-1)\cdot (n+3)}']).\]

{\bf{Unregistered Votes $\Pi_{\mathcal{T}}$:}}
For each $I_{ij}\in T_i\in_+ \mathcal{T}$, create a corresponding
unregistered vote which is defined as
$(\mathcal{L}[l(I_{ij}), r(I_{ij})], \mathcal{L}[d_i, d_{(|D|-1)\cdot (n+3)}'], \mathcal{L}{(l(I_{ij}), x_1]}, \mathcal{L}(r(I_{ij}), d_{i-1}])$.
Clearly, this vote approves exactly the four candidates between $l(I_{ij})$ and $r(I_{ij})$ (including $l(I_{ij})$ and $r(I_{ij})$), and all candidates  between $d_i$ and $d_{i+n-1}$ (including $d_i$ and $d_{i+n-1}$) in $\mathcal{L}$.
Thus, every unregistered vote approves the distinguished candidate $d_n$.

It is clear that all votes constructed above are 2-peaked with respect to $\mathcal{L}$.
Due to the construction, it is easy to see that $SC_{\mathcal{V}}(c)=n-2$ for all $c\in C$,
$SC_{\mathcal{V}}(d_i)=n-i-1$ for all $d_i\in D$ where $i\in [n-1]$, $SC_{\mathcal{V}}(d_i)=i-n-1$ for
all $d_i\in D$ where $i\in \{n+1, n+2, \dots, 2n-1\}$, $SC_{\mathcal{V}}(c)\leq n-2$ for all $c\in E$, and $SC_{\mathcal{V}}(d_n)=0$.

Now we prove that $\mathcal{E}$ is a {\yesins} if and only if $\mathcal{E}'$ is a {\yesins}.

($\Rightarrow:$) Suppose that $\mathcal{E}$ is a {\yesins} and let $S$ be a solution of $\mathcal{E}$.
Let $\vec{S}=(I_1, I_2, \dots, I_n)$ be the order of $S$ where $I_i=S\cap T_i$ for all $i\in [n]$.
Then, we can make $d_n$ the unique winner by adding votes from $\Pi_{\mathcal{T}}$ according to $S$.
More specifically, for each $I_i\in S$ we add its corresponding vote constructed above to the election.
Clearly, the final score of $d_n$ is $n$.
Due to the construction, no two added votes $\pi_v$ and $\pi_u$ which correspond to two different intervals $I_i$ and $I_j$, respectively, approve a common candidate from $C$. Thus, after adding these votes, no candidate in $C$ has a higher score than that of $d_n$.
To analyze the score of $d_j\in D$ with $j\in[n-1]$, observe that for any $i>j$
the vote corresponding to $I_{i}$ does not approve $d_j$.
Since $SC_{\mathcal{V}}(d_j)=n-j-1$ and $|S\cap T_{i}|=1$ for all $i\in [j]$, the final score of $d_j$ is less than $n$.
Similarly, to analyze the score of $d_j\in D$ with $j\in \{n+1, n+2, \dots, 2n-1\}$, observe that for any $i\leq j-n$
the vote corresponding to $I_{i}$ does not approve $d_j$.
Since $SC_{\mathcal{V}}(d_j)=j-n-1$ and $|S\cap T_i|=1$ for all $i\in \{j-n+1,j-n+2, \dots ,n\}$,
the final score of $d_j$ is less than $n$. The final score of each $c\in E$ is clearly at most $n-2$ since no unregistered vote approves $c$.
To summarize the above analysis, we conclude
that the distinguished candidate $d_n$ becomes the unique winner after adding the votes discussed above.

($\Leftarrow:$) Suppose that $\mathcal{E}'$ is a {\yesins} and $S'$ is a multiset of votes
chosen from $\Pi_{\mathcal{T}}$ such that $d_n$ uniquely wins $(\mathcal{C}, \Pi_{\mathcal{V}}\uplus S')$.
It is easy to verify that $|S'|=n$, since otherwise at least one of $C$ would be a winner. Thus, the final score of $d_n$ is $n$ and every $c\in C$ can get at most one point from $S'$. Therefore, no two votes in $S'$ approve a common candidate of $C$, implying that $S'$ must be a set. This also means that the intervals corresponding to $S'$ do not intersect.
Let $P_1, P_2, \dots, P_n$ be a partition of $\Pi_{\mathcal{T}}$
where $P_i$ contains all votes corresponding to the intervals of $T_i\in_+ \mathcal{T}$. Clearly, $P_i$ is a
set. We claim that $|S'\cap P_i|=1$ for every $i\in [n]$. Suppose this is not true, then there must be a certain
$P_i$ with $|S'\cap P_i| \geq 2$. Let $S_1=S'\cap P_i$~(thus, $|S_1|\geq 2$), $S_2=\{\pi_v\in S'\cap P_{i'}\mid i'<i\}$ and
$S_3=\{\pi_v\in S'\cap P_{i'}\mid i'>i\}$. It is clear that $|S_1|+|S_2|+|S_3|=n$.
Since all votes in $S_1$ approve both $d_i$ and $d_{i+n-1}$, all votes in $S_2$ approve $d_i$ but do not approve $d_{i+n-1}$, and all votes in $S_3$ approve $d_{i+n-1}$
but do not approve $d_i$, it holds that

\[
\begin{array}{l}
 SC_{\mathcal{V}\uplus S'}(d_i)+SC_{\mathcal{V}\uplus S'}(d_{i+n-1})\\
 =SC_{\mathcal{V}}(d_i)+|S_1|+|S_2|+SC_{\mathcal{V}}(d_{i+n-1})+|S_1|+|S_3|\\
 = n-i-1+|S_1|+|S_2|+i-2+|S_1|+|S_3|\\
 = 2n-3+|S_1|\\
 \geq 2n-1
\end{array}
\]

As a result, at least one of $d_i$ and $d_{i+n-1}$ has final score at least $n$, contradicting that $d_n$ is the unique winner.
The claim then follows.
It is now easy to see that the discrete intervals corresponding to $S'$ form a solution of $\mathcal{E}$.

The {\nphns} reduction for the nonunique-winner model is similar to the above reduction, with only the difference in the construction of the registered votes. In particular, we need to construct the registered votes so that the score of each candidate $c\in C\cup D$ is exactly one point greater than that of $c$ in the above construction. This can be done as follows:

(1) for each $x_i\in C$, create $n-1$ votes defined as
\[(x_i,\mathcal{L}[x_{(n+3)i-n-2}',x_{i(n+3)}'],\mathcal{L}(x_i,x_1], \mathcal{L}(x_i,x_{(n+3)i-n-2}'),\mathcal{L}(x_{i(n+3)}',d_{(|D|-1)\cdot (n+3)}']);\]

(2) for each $d_i\in D$ where $i\in [n-1]$, create $n-i$ votes defined as
\[(d_i,\mathcal{L}[d_{(n+3)i-n-2}',d_{i(n+3)}'],\mathcal{L}(d_i,x_1],\mathcal{L}(d_i,d_{(n+3)i-n-2}'),\mathcal{L}(d_{i(n+3)}',d_{(|D|-1)\cdot (n+3)}']);\]

(3) for each $d_i\in D$ where $i\in \{n+1, n+2, \dots, 2n-1\}$, create $i-n$ votes defined as
\[(d_i,\mathcal{L}[d_{(n+3)i-2n-5}',d_{(i-1)\cdot (n+3)}'],\mathcal{L}(d_i,x_1], \mathcal{L}(d_i,d_{(n+3)i-2n-5}'), \mathcal{L}(d_{(i-1)\cdot (n+3)}',d_{(|D|-1)\cdot (n+3)}']).\]

\subsection{Proof of Theorem~\ref{theorem:3ad2}}
\label{subsection:1}
It is clear that $r$-DV-2 is in {\np}. It remains to prove the {\nphns}. We first prove that 3-DV-2 is {\nph} by a reduction from the {\sc{Vertex Cover}} problem on graphs of maximum degree 3 which is {\nph}~\cite{DBLP:journals/siamam/GareyJ77}.
Then, we will show that the proof applies to {\appc}-DV-2 for every constant $r\geq 4$ with a slight modification.

A {\it{graph}} is a tuple $G=(V, E)$ where $V$ is the set of {\it{vertices}} and $E$
is the set of {\it{edges}}. We also use $V(G)$ to denote the vertex set of $G$. For a vertex $u\in V$, $N_G(u)$ denotes the set of its {\it{neighbors}} in $G$, i.e., $N_G(u)=\{w\mid (w,u)\in E\}$. The {\it{degree}} of a vertex is the number of its neighbors. A vertex of degree $i$ is called a {\it{degree-$i$}} vertex.
A graph is of {\it{maximum degree $3$}} if every vertex has degree at most $3$. An {\it independent set} of a graph $G=(V,E)$
is a subset $S\subseteq V$ such that there is no edge between every two vertices in $S$. Meanwhile, the complement $V\setminus S$ is called a {\it{vertex cover}} of $G$.

\begin{quote}
\noindent{Vertex Cover on Graphs of Maximum Degree 3 (VC3)}\\
\noindent\textit{Input:} A graph $G=(V, E)$ of maximum degree 3 and a positive integer $\kappa$.\\
\noindent\textit{Question:} Does $G$ have a vertex cover of size at most $\kappa$?
\end{quote}

To prove the {\nphns} of 3-DV-2, we first study a 
property of graphs of maximum degree 3. This property may be of independent interest since many graph problems are {\nph} when restricted to graphs of maximum degree 3 (see e.g.,~\cite{garey,DBLP:journals/tcs/GareyJS76,DBLP:journals/tcs/Picouleau94}).

An {\textit{interval}} over the real line is a closed set $[a, b]=\{x\in \mathbb{R}\mid a\leq x\leq b\}$ where $a$ and $b$ are real numbers. 
An interval is {\it trivial} if $a=b$. For an interval $I=[a,b]$, $l(I)$ denotes its left-endpoint $a$, and $r(I)$ denotes its right-endpoint $b$.
A {\it $t$-interval} is a set of $t$ intervals over the real line. A {\it{$t_{\leq}$-interval}} is a $t'$-interval for some $t'\leq t$.
A graph $G=(V, E)$ is a {\it t-interval graph} if there is a set
$\mathcal{I}(G)$ of $t_{\leq}$-intervals and a bijection $f: V\rightarrow \mathcal{I}(G)$ such that for every $u, w \in V$, $(u, w)\in E$ if and only if $f(u)$ and $f(w)$ intersect.
Here, $\mathcal{I}(G)$ is called a {\it t-interval representation}
of $G$. 
For simplicity, we use $\mathcal{I}_u(G)=\{I_u^1, I_u^2, \dots ,I_u^{t'}\}$ to denote $f(u)$, where each $I_u^i\in \mathcal{I}_u(G)$ is an interval and $t'\leq t$. Moreover, when it is clear from the context, we write $\mathcal{I}_u$ for $\mathcal{I}_u(G)$. For two real numbers $a$ and $b$ with $a\leq b$, we define $(a,b)=\{x\in \mathbb{R}\mid a<x<b\}$.

The following lemma states that every graph of maximum degree 3 has a 2-interval representation such that every vertex is represented by a 2-interval where one interval is trivial and, moreover, 2-intervals may only intersect at the endpoints.

\begin{lemma}
\label{lemma:ids}
For every graph $G$ of maximum degree 3 there is a 2-interval representation such that for every $u\in V(G)$,
$\mathcal{I}_u=\{I_u^1, I_u^2\}$ and one of the following conditions holds:
\begin{enumerate}
\item $I_u^1=[x_1, x_1], I_u^2=[x_2, x_3], x_1<x_2<x_3$ and
$\nexists u'\in V(G)\setminus \{u\}$ such that $r(I(u'))\in (x_2, x_3)$~or~$l(I(u'))\in (x_2, x_3);$
\item $I_u^1=[x_1,x_2], I_u^2=[x_3,x_3], x_1<x_2<x_3$ and
$\nexists u'\in V(G)\setminus \{u\}$ such that $r(I(u'))\in (x_1, x_2)$ or $l(I(u'))\in (x_1, x_2),$ for some $I(u')\in\{I^1_{u'}, I^2_{u'}\}.$
\end{enumerate}
Moreover, such a $2$-interval representation can be found in polynomial time. See \myfig{figgraphproperty} for an example.
\end{lemma}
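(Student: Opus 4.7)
The plan is to construct the required 2-interval representation by an explicit polynomial-time procedure. For each vertex $u \in V(G)$, I would assign a point $P(u) \in \mathbb{R}$ and a proper interval $B(u) = [L(u), R(u)]$ with $L(u) < R(u)$, and set $\mathcal{I}_u = \{\{P(u)\}, B(u)\}$. This fits Case~(1) of the lemma when $P(u) < L(u)$ and Case~(2) when $P(u) > R(u)$. The no-interior-endpoint condition forces every two proper intervals to be either disjoint, identical, or sharing exactly one endpoint; likewise, every point lying inside a proper interval must be at an endpoint of that interval. Consequently each adjacency of $G$ will be realized in one of four ways: (i)~$P(u) = P(v)$; (ii)~$P(u)$ coincides with an endpoint of $B(v)$ (or vice versa); (iii)~$B(u)$ and $B(v)$ share an endpoint; (iv)~$B(u) = B(v)$.

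To organize the placement, I would decompose the edge set via the Linear Arboricity Theorem for maximum degree $3$: $E(G)$ partitions in polynomial time into two linear forests $F_1$ and $F_2$. Using $F_1$ as an \emph{interval chain}, I would lay out the proper intervals along each path $v_1\text{-}v_2\text{-}\dots\text{-}v_\ell$ of $F_1$ as consecutive endpoint-sharing segments $B(v_i) = [c+i,\, c+i+1]$, with distinct coordinate offsets $c$ for different paths. This realizes every $F_1$-edge via mechanism~(iii) and guarantees that all shared endpoints of $B$-intervals lie at integer ``transition'' points between consecutive chain segments, which is compatible with the no-interior-endpoint requirement. The remaining edges of $F_2$ are then realized by placing the points $P(v)$: a matching $M \subseteq F_2$ is handled by point coincidence at fresh, isolated coordinates well outside the chain region (mechanism~(i)), and the leftover edges are handled by positioning points exactly at chain-interval endpoints (mechanism~(ii)).

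The main obstacle will be coordinating the point placements without introducing spurious adjacencies: a vertex with two $F_2$-neighbors cannot satisfy both by mechanism~(i) alone, and placing a point at an endpoint already shared by two consecutive chain intervals triggers an unintended adjacency with the second chain neighbor sitting at that endpoint. I would resolve this by further refining $F_2$ into a matching together with a small residual set, and exploiting the ``free'' endpoints of each $F_1$-path (the leftmost and rightmost endpoints of a chain belong to only one chain interval, so positioning a point there does not create a side effect). When a conflict cannot be avoided locally, I would swap the roles of $F_1$ and $F_2$ on the offending component or reroute a single edge between them; the maximum-degree-$3$ hypothesis keeps every such adjustment local and decidable in polynomial time via a greedy sweep along each $F_2$-path. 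Finally, I would verify that the construction realizes exactly $E(G)$ and satisfies the no-interior-endpoint condition at every vertex, thereby establishing that each $\mathcal{I}_u$ falls into Case~(1) or Case~(2).
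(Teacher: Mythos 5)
Your approach is genuinely different from the paper's (which proves a strengthened claim by induction on $|V(G)|$, with an explicit case analysis that glues 2-interval representations of the components of $G\setminus\{v\}$), and the skeleton is plausible: one trivial interval plus one proper interval per vertex, chain intervals realizing a linear forest $F_1$, points realizing $F_2$. But the crucial coordination step is exactly where the proposal stops being a proof, and the concrete mechanism you do specify is broken. If $F_2$ contains a path $a\,\text{-}\,b\,\text{-}\,c\,\text{-}\,d$ and your matching is $M=\{ab,cd\}$ (e.g.\ any maximal matching picked without care), then realizing $ab$ and $cd$ by point coincidence pins $P(a)=P(b)$ and $P(c)=P(d)$ at fresh coordinates, and the leftover edge $bc$ can no longer be realized at all: each vertex has only one point, both $P(b)$ and $P(c)$ are already consumed, and moving a coincident pair onto an endpoint of $B(c)$ or $B(b)$ creates a spurious adjacency ($ac$ or $bd$). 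So ``matching plus residual'' fails on the very first nontrivial configuration, and the fallbacks you offer --- ``swap the roles of $F_1$ and $F_2$ on the offending component,'' ``reroute a single edge,'' ``greedy sweep'' --- are not defined precisely enough to check (swapping is not even well-typed, since $F_1,F_2$ are edge sets of the whole graph, and rerouting an edge can destroy the linear-forest property). You also lean on a constructive, polynomial-time version of the linear arboricity theorem for $\Delta\le 3$ as a black box, which the paper's self-contained induction avoids.

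The gap is repairable along the lines you gesture at, but the repair is the actual content: orient each $F_2$-path $p_1,\dots,p_m$ and realize the edge $p_ip_{i+1}$ by placing $P(p_i)$ at a \emph{free} endpoint of $B(p_{i+1})$; this works because every internal vertex of an $F_2$-path has $F_2$-degree $2$, hence $F_1$-degree at most $1$, hence its chain interval has a free endpoint, and each free endpoint receives at most one point (the unique $F_2$-predecessor); only the terminal edge may need the coincidence $P(p_{m-1})=P(p_m)$ at a fresh coordinate, and there $P(p_m)$ is guaranteed unused. Without this (or an equivalent) argument, together with the routine checks that no two chains share coordinates, that fresh coordinates avoid all interval interiors, and that the three endpoints of each $\mathcal{I}_u$ are distinct so that the point lies strictly outside $[x_2,x_3]$ (which the reduction needs via $|D(u)|=3$), the proposal identifies the right obstacle but does not overcome it.
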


\begin{figure}[h!]
\begin{center}
\includegraphics[width=\textwidth]{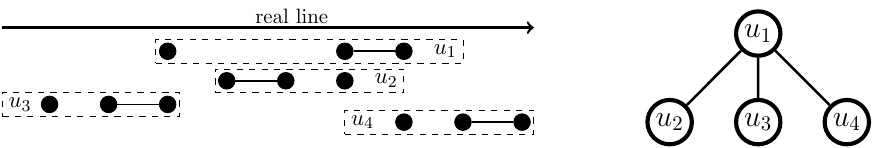}
\caption[A property of graphs with maximum degree 3]{The figure on the left-side illustrates a 2-interval representation of the graph on the right-side.}
\label{figgraphproperty}
\end{center}
\end{figure}

The proof of Lemma~\ref{lemma:ids} is deferred to the appendix. We now show the reduction.
Let $\mathcal{E}=(G, \kappa)$ be an instance of VC3 and $\mathcal{I}(G)$ be a 2-interval representation of $G$
as stated in Lemma~\ref{lemma:ids}. For every $\mathcal{I}_u=\{I^1_u, I^2_u\}$,
let $D(u)$ be the set of the endpoints of $I^1_u$ and $I^2_u$~(due to Lemma~\ref{lemma:ids}, $|D(u)|=3$ for all $u\in V(G)$), and let
$\Gamma=\bigcup_{u\in V(G)}{D(u)}$. Let $\vec{\Gamma}=(x_1, x_2, \dots, x_{|\Gamma|})$ be
the order of $\Gamma$ with $x_i< x_{i+1}$ for all $i\in [|\Gamma|-1]$.
We construct an instance
$\mathcal{E}'=((\mathcal{C}, \Pi_{\mathcal{V}}), p\in \mathcal{C}, \mathcal{L}, R=\kappa)$ of 3-DV-2 as follows.

{\bf{Candidates:}} $\mathcal{C}=\Gamma \cup \{p, c_1, c_2, c_3, c_4\}$ with $c_1, c_2, c_3, c_4$ being dummy candidates, which would never be winners no matter which up to $R$ votes are deleted. Here, $\{p, c_1, \dots ,c_4\}$ is disjoint with $\Gamma$.

{\bf{2-Harmonious Order:}} $\mathcal{L}=(\vec{\Gamma}, p, c_1, c_2, c_3, c_4)$.

{\bf{Votes:}} We create in total $|V(G)|+2$ votes. First, we create $|V(G)|$ votes each of which corresponds to an $\mathcal{I}_u$ in $\mathcal{I}(G)$ for $u\in V(G)$. In particular, for every $\mathcal{I}_u$ where $u\in V(G)$, we create a vote \[\pi_u=(x_i,x_j,x_k,\mathcal{L}(x_{i},x_1],\mathcal{L}(x_i,x_j),\mathcal{L}(x_j,x_k),\mathcal{L}(x_k,c_4]),\] where $(x_i, x_j, x_k)$ is the order of $D(u)$ such that $x_i<x_j<x_k$. Apparently, $\pi_u$ approves all candidates in $D(u)$ and disapproves $p$.
Due to Lemma~\ref{lemma:ids}, either $x_i$ or $x_k$ lies consecutively with $x_j$ in $\mathcal{L}$, which implies that $\pi_u$ is 2-peaked with respect to $\mathcal{L}$. Let $\Pi(G)$ be the multiset of the above $|V(G)|$ votes.
Second, we create two further votes defined as
$(p,c_1,c_2,c_3,c_4,\mathcal{L}(p, x_1])$ and
$(p,c_3,c_4,c_1,c_2,\mathcal{L}(p, x_1])$, respectively.
It is clear that these two votes are 2-peaked with respect to $\mathcal{L}$.

In the following, we prove that $\mathcal{E}$ is a {\yesins}
if and only if $\mathcal{E}'$ is a {\yesins}.

$(\Rightarrow:)$ Suppose that $\mathcal{E}$ is a {\yesins} and $S$ is a vertex cover of
size at most $\kappa$ of $G$. Then, we delete all
votes in $\{\pi_u\mid u\in S\}$. After deleting these votes, no
two votes of $\Pi(G)$ approve a common candidate in $\mathcal{C}$, since otherwise $V(G)\setminus S$ could not be an independent set, contradicting the fact that
$S$ is a vertex cover. Thus, after deleting these votes all candidates
except for $p$ have only one point. Since $p$ has two points, $p$ is the unique winner.

$(\Leftarrow:)$ Suppose that $\mathcal{E}'$ is a {\yesins}. Then $\mathcal{E}'$ has a solution containing only votes disapproving $p$. Let $S'$ be such a solution of size at most $\kappa$.
Therefore, $p$ has two points in the election after deleting all votes in $S'$. It follows that every other candidate has at most one
point after deleting all votes of $S'$. As a result, no two votes of $\Pi(G)$ approve
a common candidate in $\mathcal{C}$ in the final election, implying that the vertices corresponding to
$S'$ form a vertex cover of $G$.

In order to prove that {\appc}-DV-2 is {\nph} for a constant $r\geq 4$, we need to modify the proof slightly. First, we add some dummy candidates. More specifically, for every $u\in V(G)$ such that $[x_i,x_{i+1}]\in \mathcal{I}_u$ where $1\leq i\leq |\Gamma|-1$, we create ${\mathappc}-3$ dummy candidates and put them between $x_i\in \Gamma$ and $x_{i+1}\in \Gamma$ in $\mathcal{L}$ (the relative order of these dummy candidates in $\mathcal{L}$ does not matter).
Besides, we have $2\mathappc-6$ dummy candidates $c_5,c_6, \dots ,c_{2\mathappc-2}$ lying after $c_4$ in $\mathcal{L}$, with the order $(c_5,c_6, \dots ,c_{2\mathappc-2})$.
Thus, there are ${({\mathappc}-3)}\cdot |V(G)|+2\mathappc-6$ new dummy candidates in total.
We change the votes in $\Pi(G)$ as follows: for every $u\in V(G)$ with $\mathcal{I}_u=\{[x_i,x_{i+1}],[x_j,x_j]\}, i+1<j$ (resp. $\mathcal{I}_u=\{[x_i,x_i],[x_j,x_{j+1}]\}, i<j$), we create a vote defined as $(\mathcal{L}[x_i,x_{i+1}],x_j,\mathcal{L}(x_{i},x_1],\mathcal{L}(x_{i+1},x_j), \mathcal{L}(x_j,c_{2\mathappc-2}])$ (resp. $(x_i,\mathcal{L}[x_j,x_{j+1}],\mathcal{L}(x_{i},x_1],\mathcal{L}(x_{i},x_j), \mathcal{L}(x_{j+1},c_{2\mathappc-2}])$).
As for the last two created votes, we replace them with the two votes defined as $(\mathcal{L}[\mathdiscandi,c_{2{\mathappc}-2}],\mathcal{L}(\mathdiscandi,x_1])$ and $(\mathdiscandi,\mathcal{L}[c_{\mathappc},c_{2\mathappc-2}], \mathcal{L}[c_1,c_{\mathappc}),\mathcal{L}(\mathdiscandi,x_1])$, respectively. Then, with the same argument, we can show that {\appc}-DV-2 is {\nph}.

To prove the {\nphns} of the nonunique-winner model of {\appc}-DV-2 for every constant $\mathappc \geq 3$, we adapt the above reductions slightly. In particular, we keep all votes in $\Pi(G)$ and the second-last vote created above, but discard the last vote created above (so that the score of the distinguished candidate {\discandi} is 1 in the given election). Other parts remain unchanged.

\section{3-Peaked Elections}
In Section 2, we proved that for every constant $r$ control by adding votes for {\appc}-Approval is polynomial-time solvable when restricted to 2-peaked elections.
In this section, we show that the tractability of the problem does not hold when extended to 3-peaked elections, even when $r$ is a small constant.

\begin{theorem}
\label{theorem:4acav3p}
{\appc}-AV-3 is {\npc} for every constant $r\geq 4$.
\end{theorem}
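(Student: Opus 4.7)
The plan is to prove NP-hardness of $4$-AV-3 by a polynomial-time reduction from Exact 3-Cover (X3C), and then to lift the result to any constant $r\geq 5$ via the same padding argument used in Footnote 1: each unregistered vote additionally approves $r-4$ dummy candidates placed consecutively next to $p$ in the 3-harmonious ordering $\mathcal{L}$, which preserves 3-peakedness and leaves the scoring argument unchanged.

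Given an X3C instance with universe $U=\{u_1,\dots,u_{3q}\}$ and 3-element sets $F_1,\dots,F_m$, I will construct a $4$-SP-AV election whose ordering $\mathcal{L}$ places the distinguished candidate $p$, a block of element-representing candidates (possibly with several copies per element), a sequence of set-specific ``gadget'' blocks of consecutive auxiliary candidates, and padding dummies. For each set $F_j=\{u_a,u_b,u_c\}$ the corresponding unregistered vote approves $p$ together with three further candidates chosen so that the approved quadruple decomposes into at most three discrete intervals over $\mathcal{L}$: typically one pair of adjacent candidates inside the gadget of $F_j$, one ``element'' candidate elsewhere, and the singleton $\{p\}$. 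The registered votes will be calibrated so that $p$'s baseline score is exactly $q$ below the threshold while each element-type candidate sits one below that threshold; hence adding $q$ votes drives $p$'s score to exactly $q$, and the uniqueness requirement forces every element-type candidate to absorb at most one of the added approvals. Since the $q$ added votes contribute $3q=|U|$ element-approvals in total, the unique-winner condition becomes equivalent to an exact cover. The forward direction (cover $\Rightarrow$ solution) is immediate from the calibration; the converse uses the observation (analogous to Observation~\ref{obs3}) that any solution may be assumed to consist of votes approving $p$, so the scoring accounts balance exactly.

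The key technical obstacle is to engineer element-copies, gadgets, and registered votes so that (i) each unregistered set-vote genuinely encodes covering all three elements of its set despite having only four approvals, (ii) the scores across several copies of the same element are aggregated correctly---via shared ``anchor'' candidates or linked registered votes---to faithfully reflect the exact-cover constraint, and (iii) every registered and every unregistered vote is 3-peaked with respect to a single ordering $\mathcal{L}$. Requirement (ii) is the most delicate, since any aggregation mechanism requires extra approvals on votes that must themselves remain within three discrete intervals of $\mathcal{L}$; I expect to handle this by positioning the anchor candidates adjacent to $p$ so that approvals of anchors are absorbed into the interval containing $p$ rather than creating a fourth peak. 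Once the $r=4$ construction is in place, the extension to any constant $r\geq 5$ follows immediately from the padding argument mentioned above.
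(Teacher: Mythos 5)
There is a genuine gap, and it sits exactly where you locate it yourself. Your plan needs each unregistered set-vote to ``encode covering all three elements of its set despite having only four approvals'' while staying 3-peaked with respect to one common ordering $\mathcal{L}$. With four approvals, one interval used by $p$ (and any anchors next to $p$), only two discrete intervals remain, so a vote can touch at most two separated element positions; for arbitrary X3C triples you cannot choose a single ordering in which two elements of every set are adjacent, which is why you introduce element-copies and per-set gadget blocks. But then each set-vote approves only \emph{one} element-type candidate (plus a gadget pair and $p$), and your own accounting breaks: you claim the $q$ added votes contribute $3q$ element-approvals, yet by your vote design they contribute only $q$. The missing piece is precisely the aggregation mechanism that makes ``at most one approval over all copies of an element'' (rather than per copy) follow from the unique-winner condition; you only say you ``expect to handle this by positioning the anchor candidates adjacent to $p$,'' which is not a construction, and any anchor shared by many set-votes threatens either the score calibration or the 3-peakedness of those votes. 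As it stands, neither direction of the equivalence can be verified.

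The paper avoids this obstacle by changing the source problem: it reduces from Independent Set on bounded degree-3 graphs and proves a structural result (Lemma~\ref{lemma:ids}) showing every such graph has a 2-interval representation in which each vertex $u$ is described by exactly three endpoints, two of which are consecutive among all endpoints. The endpoints become the candidates of $\Gamma$, the ordering $\mathcal{L}=(\vec{\Gamma},p,c_1,c_2,c_3)$ is fixed, registered votes level every $\Gamma$-candidate at score $k-2$, and the unregistered vote for $u$ approves $D(u)\cup\{p\}$ --- by the lemma this is two discrete intervals plus the singleton $p$, i.e.\ genuinely 3-peaked, and ``no two added votes share a $\Gamma$-candidate'' is literally independence in $G$. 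If you want to salvage an X3C-style route you would have to supply a gadget with the same function as that lemma; otherwise the cleanest fix is to adopt the interval-representation approach (or some other bounded-degree source problem whose incidence structure embeds into few intervals per vote). Your padding argument for lifting $r=4$ to constant $r\geq 5$ is fine in spirit and matches the paper's remark, but it only helps once the base case exists.
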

\begin{proof}
It is clear that {\appc}-AV-3 is in {\np}. It remains to prove the {\nphns} of the problem. We first prove the {\nphns} of 4-AV-3 by a reduction from the {\sc{Independent Set}} problem on graphs of maximum degree 3 which is {\nph}~\cite{DBLP:journals/siamam/GareyJ77}.

\begin{quote}
\noindent Independent Set on Graphs of Maximum Degree 3 (IS3)\\
\noindent{\it Input:} A graph $G=(V,E)$ of maximum degree 3 and a positive integer $\kappa$.\\
\noindent{\it Question:} Does $G$ have an independent set containing exactly $\kappa$ vertices?
\end{quote}

For an instance $\mathcal{E}=(G, \kappa)$ of IS3,
let $\mathcal{I}(G)$ be a 2-interval representation of $G$ which satisfies all conditions stated in Lemma~\ref{lemma:ids}.
Let $D(u), \Gamma$ and $\vec{\Gamma}$ be defined as in Subsect.~\ref{subsection:1}.
We construct an instance $\mathcal{E}'=((\mathcal{C}, \Pi_{\mathcal{V}}), p \in \mathcal{C}, \Pi_{\mathcal{T}}, \mathcal{L}, R=\kappa)$ of 4-AV-3 as follows.

{\bf{Candidates:}} $\mathcal{C}=\Gamma \cup \{p, c_1, c_2, c_3\}$, where $\{p, c_1, c_2, c_3\}$ is disjoint from $\Gamma$.

{\bf{3-Harmonious Order:}} $\mathcal{L}=(\vec{\Gamma}, p, c_1, c_2, c_3)$.

{\bf{Registered Votes $\Pi_{\mathcal{V}}$:}} The role of registered votes is to ensure that all candidates of $\Gamma$ have the same score $\kappa-2$.
To this end, we first create $\kappa-2$ votes defined as
$(\mathcal{L}[x_i, x_{i+3}], \mathcal{L}(x_i, x_1],\mathcal{L}(x_{i+3}, c_3])$
for every $i=1,5,\dots, 4\lfloor |\Gamma|/4\rfloor-3$. Then, we create
some further votes according to $|\Gamma|$.
\noindent\textbf{Case 1.} $|\Gamma|\equiv 0 \mod 4$. We create no
further vote.
\noindent\textbf{Case 2.} $|\Gamma|\equiv 1 \mod 4$. We create
additional $\kappa-2$ votes defined as
$(x_{|\Gamma|}, \mathcal{L}[c_1, c_3], \mathcal{L}(x_{|\Gamma|}, x_1], p)$.
\noindent\textbf{Case 3.} $|\Gamma|\equiv 2 \mod 4$. We create
additional $\kappa-2$ votes defined as
$(x_{|\Gamma|-1}, x_{|\Gamma|}, c_1, c_2, \mathcal{L}(x_{|\Gamma|-1}, x_1], p, c_3)$.
\noindent\textbf{Case 4.} $|\Gamma|\equiv 3 \mod 4$. We create
additional $\kappa-2$ votes defined as $(\mathcal{L}[x_{|\Gamma|-2}, x_{|\Gamma|}], c_1, \mathcal{L}(x_{|\Gamma|-2}, x_1], p, c_2, c_3)$.

{\bf{Unregistered Votes $\Pi_{\mathcal{T}}$:}} 
For each $u\in V(G)$, let $(x_i, x_j, x_k)$ be the order of $D(u)$ with $x_i<x_j<x_k$. We create a
vote $\pi_u=(x_i, x_j, x_k, p, \mathcal{L}(x_{i},x_1], \mathcal{L}(x_i, x_{j}), \mathcal{L}(x_{j},x_k), \mathcal{L}(x_k,p), \mathcal{L}(p,c_3])$.
Due to Lemma~\ref{lemma:ids}, either $x_i$ or $x_k$ lies consecutively with $x_j$ in $\mathcal{L}$; thus, in the worst case
$\pi_u$ has 3 peaks $x_{\alpha}, x_{\beta}$ and $p$ where $\{x_{\alpha}, x_{\beta}\}\subseteq \{x_i, x_j, x_k\}$
($\{x_{\alpha}, x_{\beta}\}$ depends on whether $x_j$ lies consecutively with $x_i$ or with $x_k$), with respect to $\mathcal{L}$.

In the following, we prove that $\mathcal{E}$ is a {\yesins} if and
only if $\mathcal{E}'$ is a {\yesins}.
It is easy to see that $SC_{\mathcal{V}}(x)=\kappa-2$ for all $x\in \mathcal{C}\setminus \{p, c_1, c_2, c_3\}$, $SC_{\mathcal{V}}(p)=0$
and $SC_{\mathcal{V}}(c)\leq \kappa-2$ for all $c\in  \{c_1, c_2, c_3\}$.

$(\Rightarrow:)$ Suppose that $\mathcal{E}$ is a {\yesins} and $S$ is an independent set of size $\kappa$.
Then we add all votes corresponding to $S$, i.e., all votes in $\{\pi_u\mid u\in S\}$, to the election.
\mbox{Since} $S$ is an independent set, no two added votes approve a common candidate except $p$;
thus, each candidate except $p$ has a final score at most $\kappa-1$.
Since each added vote approves $p$, it follows that $p$ has a final score $\kappa$,
implying that $p$ becomes the unique winner after adding these votes.

$(\Leftarrow:)$ Suppose that $\mathcal{E}'$ is a {\yesins} and $S'$ is a solution. Clearly, $p$ has a final score $\kappa$. Since $p$ is the unique winner, for every $c\in \mathcal{C}\setminus \{p\}$,
there is at most one vote in $S'$ approving $c$. Thus, no two votes in $S'$ approve a common candidate except $p$. Due to the construction, the vertices corresponding to $S'$ form an independent set.

The proof applies to {\appc}-AV-3 for every constant $r\geq 5$ by a similar modification as discussed in Subsect.~\ref{subsection:1}.
To prove the {\nphns} of the nonunique-winner model of {\appc}-AV-3 for every $\mathappc \geq 4$, we adapt the above reductions so that every candidate in $\Gamma$ has the same score $\mathindependentsetdthree-1$ other than $\mathindependentsetdthree-2$. This can be done by replacing all appearances of ``$\mathindependentsetdthree-2$'' with ``$\mathindependentsetdthree-1$'' in the above reductions. The correctness argument is similar.
\end{proof}

\section{Parameterized Complexity}
In this section, we study {\appc}-Approval control problems from the viewpoint of parameterized complexity.
The first two {\fpt} results are for the general case, i.e., each registered and unregistered vote (if applicable) is not necessarily {\kpeak}-peaked but can be defined as any linear order over the candidates.

\begin{theorem}
\label{theorem:3ad2fpt}
For every constant {\appc}, {\appc}-DV is {\fpt} with respect to the number of deleted votes.
\end{theorem}
\begin{proof}
We derive an {\fpt}-algorithm for {\appc}-DV to prove the theorem. Let $\mathcal{E}=((\mathcal{C},\Pi_{\mathcal{V}}),p\in \mathcal{C}, R)$ be a given instance. Let $\mathcal{C}_1=\{c\in \mathcal{C}\setminus \{p\}\mid SC_{\mathcal{V}}(c)\geq SC_{\mathcal{V}}(p)\}$ and $\Pi_{\mathcal{V}_1}=\{\pi_v\in_+ \Pi_{\mathcal{V}} \mid \mathcal{C}_1\cap 1(v)\neq \emptyset\}$. That is, $\Pi_{\mathcal{V}_1}$ contains all votes which approve at least one candidate having at least the same score as that of $p$. Observe that every {\yesins} of {\appc}-DV has a solution $S$ with $S\sqsubseteq \Pi_{\mathcal{V}_1}$.
Hence, we can restrict our attention to $\Pi_{\mathcal{V}_1}$.
Since at most $R$ votes can be deleted and each vote approves at most {\appc} candidates in $\mathcal{C}_1$, if $\mathcal{E}$ is a {\yesins}, it must be that $|\mathcal{C}_1|\leq r\cdot R$. So we assume that $|\mathcal{C}_1|\leq r\cdot R$; since otherwise we can immediately conclude that ${\mathcal{E}}$ is a {\noins}.
We say that two votes have the same {\it type} if they approve the same candidates in $\mathcal{C}_1$.
Clearly, there are at most $O((\mathappc\cdot R)^{\mathappc})=O(R^{\mathappc})$ different types of votes in $\Pi_{\mathcal{V}_1}$.
Since every solution includes at most $R$ votes from each type of votes, we have at most ${O(f(R))+R-1 \choose R}=O(f(R)^R)$ cases to check where $f(R)=R^{\mathappc}$, implying an {\fpt}-algorithm for {\appc}-DV with respect to $R$.
\end{proof}

\begin{theorem}\label{fptacav}
For every constant {\appc}, {\appc}-AV is {\fpt} with respect to the number of added votes.
\end{theorem}
\begin{proof}
To prove the theorem, we reduce {\appc}-AV to a generalized {\appc}-{\sc{Set Packing}} problem, and derive an {\fpt}-algorithm for this generalized {\appc}-{\sc{Set Packing}} problem.

In the {\it {\appc}-{\sc{Set Packing}}} problem, we are given a set $X$ of elements, a multiset $U$ of {\appc}-subsets of
$X$ and an integer $R\geq 0$, and the question is whether there is a subcollection $T$ of $U$ such that $|T|=R$ and every
element in $X$ occurs in at most one {\appc}-subset in $T$. Jia, Zhang and Chen~\cite{DBLP:journals/jal/JiaZC04setpacking}
derived an ingenious {\fpt}-algorithm for the {\appc}-{\sc{Set Packing}} problem for every constant {\appc}, with respect to $R$. The {\fpt}-algorithm in~\cite{DBLP:journals/jal/JiaZC04setpacking} resorts to the  greedy localization technique, which has been proved useful in solving many parameterized problems~\cite{DBLP:journals/talg/ChenLLSZ12,DBLP:conf/iwpec/DehneFRS04,FH+04,DBLP:journals/algorithmica/FellowsKNRRSTW08,DBLP:conf/iwpec/LiuLCS06}.
At a general level, the {\fpt}-algorithm in~\cite{DBLP:journals/jal/JiaZC04setpacking} first greedily computes a maximal collection of pairwise disjoint sets. If it is of size at least $R$, the instance is solved; otherwise, the maximal collection must involve at most $\mathappc\cdot R$ elements. Based on the observation that every set in a solution intersects some set in the maximal collection, the algorithm enumerates all ``potential'' solutions iteratively, where each iteration is based on a partial solution obtained via a ``local extension'' move from the previous iteration. More importantly, the number of enumerated ``potential'' solutions in each iteration is bounded by a function of the parameter (see~\cite{DBLP:journals/jal/JiaZC04setpacking} for further details).

Compared with the {\appc}-{\sc{Set Packing}} problem, the generalized
{\appc}-{\sc{Set Packing}} problem allows every element in $X$ to occur in several {\appc}-subsets in the solution. Let $\mathds{N}^+$ be the set of all positive integers.
The formal definition of the problem is as follows.

\begin{quote}
\noindent Multi-{\appc}-Set Packing (M{\appc}SP)\\[1mm]
\noindent\textit{Input:} A universal set $X$, a multiset $U$ of {\appc}-subsets of $X$ where each $x\in X$ occurs in at least one {\appc}-subset, a mapping $f: X\rightarrow \mathds{N}^+$ with $f(x)\leq x_U$ for every $x\in X$, where $x_U$ is the number of occurrences of $x$ in the {\appc}-subsets of $U$, and a positive integer $R$.\\[2mm]
\noindent\textit{Question:} Is there a submultiset $T$ of $U$ such that $|T|=R$ and each $x\in X$ occurs in at most $f(x)$ many {\appc}-subsets of $T$?
\end{quote}

We adopt the same notation system as in~\cite{DBLP:journals/jal/JiaZC04setpacking}.
We call a submultiset $T$ of $U$ an {\it{$f$-{\appc}-set packing}} if every $x\in X$ occurs in at most $f(x)$ many {\appc}-subsets in $T$.
An {\it $f$-{\appc}-set packing} $T$ is {\it maximal} if after adding any set from $U\ominus T$ to $T$,
some $x\in X$ occurs in more than $f(x)$ sets in $T$.
A {\it partial set} $\sigma^*$ is a set in $U$ with zero or more elements in the set replaced by the
symbol ``$*$''. A set without ``$*$'' is also called a {\it regular set}.
The set consisting of the non-$*$ elements in a partial set $\sigma^*$
is denoted as $reg(\sigma^*)$. A partial set $\sigma^*$ is {\it consistent with} a regular set $\sigma$ if $reg(\sigma^*)\subseteq \sigma$.
For an $f$-{\appc}-set packing $T$, let $\mathbb{S}(T)$ be the set of objects contained in some {\appc}-subset in $T$. For instance, for $T=\{\{1,2,3\},\{1,2,3\},\{2,4,6\},\{6,2,1\}\}$, $\mathbb{S}(T)=\{1,2,3,4,6\}$.
We say that an order $(x_{a(1)}, x_{a(2)}, \dots, x_{a(t)})$, where $x_{a(i)}\in X$ for every $i\in \{1,2, \dots ,t\}$ (it may be that $x_{{a(i)}}=x_{a(j)}$ for $i\neq j$), is {\it{valid}} if every $x\in X$ occurs at most $f(x)$ times in the order.
For example, for $X=\{1, 4, 6\}$, $f(1)=2, f(4)=1$ and $f(6)=4$, $(1, 4, 1)$ is a valid order but $(1, 4, 1, 6, 1)$ is not a valid order.
A multiset $T'$ of partial sets is {\it{valid}}
if every $x\in X$ occurs in at most $f(x)$ partial sets in $T'$.

Our algorithm is a natural generalization of the one for the {\appc}-{\sc{Set Packing}} problem studied in~\cite{DBLP:journals/jal/JiaZC04setpacking}. The main idea of our algorithm is as follows. We first find an arbitrary maximal $f$-{\appc}-set packing $T_0$. This can be done in polynomial time. If $|T_0|\geq R$, we are done. Otherwise, $|\mathbb{S}(T_0)|\leq \mathappc\cdot R$. Then, we try to extend all possible valid multisets $T_R$ of the form $\{\sigma_1^*=\{x_1,*, \dots ,*\},\sigma_2^*=\{x_2,*, \dots ,*\}, \dots ,\sigma_R^*=\{x_R,*, \dots ,*\}\}$ to a solution, where each $x_i, 1\leq i\leq R$, belongs to $\mathbb{S}(T_0)$ (it may be that $x_i=x_j$ for $i\neq j$). We make sure that if the given instance is a {\yesins}, then at least one extension leads to a solution. When extending a $T_R$, we first make a copy $Q_R$ of $T_R$, which is used to find a subset of $X$ such that, in the case that $T_R$ can be extended to a solution, at least one element in the subset is included in a solution extended from $T_R$. This is done as follows: for each regular set in $U$, we check if there is a consistent partial set in $Q_R$ such that replacing the partial set with the regular set does not lead $Q_R$ to be invalid. If this is the case, we replace the partial set in $Q_R$ with the regular set. If all partial sets of $Q_R$ can be replaced by this way, we are done. Otherwise, there is still a partial set $\sigma^*$ after the replacement. An observation is that if the current $T_R$ can be extended to a solution, then in every solution the partial set $\sigma^*$ is extended to a regular set where one of its elements is from $\mathbb{S}(Q_R)$. Due to this observation, we extend $T_R$ by enumerating all possibilities of the element in $\mathbb{S}(Q_R)$ that is put in $\sigma^*$ in the solution. As the size of $\mathbb{S}(Q_R)$ is bounded by $r\cdot R$, we have at most $r\cdot R$ possibilities to consider. After each extension of $T_R$, we reset $Q_R=T_R$ and continue to extend $T_R$ analogously. The procedure terminates until we arrive at a solution or all possibilities are enumerated.
A formal description of the algorithm is shown in Algorithm~\ref{alga}.

\begin{algorithm}
{
find an arbitrary maximal $f$-{\appc}-set packing $T_0$ in polynomial time\;
\If{$|T_0|\geq R$}{return {\yes}\;}
\If{there is no valid order $(x_{1}, x_{2},\dots,x_{R})$ over $\mathbb{S}(T_0)$}{return {\no}\;\tcc*[f]{The correctness of this step is based on the observation that each set in a solution of a {\yesins} intersects $\mathbb{S}(T_0)$: if this is not the case, then $T_0$ cannot be a maximal $f$-{\appc}-set packing, since we can add the set in the solution which does not intersect $\mathbb{S}(T_0)$ to $T_0$.}}
\ForEach{possible valid order $(x_{1}, x_{2}, \dots, x_{R})$ over $\mathbb{S}(T_0)$}
{
let $T_R=\{\sigma_1^*, \sigma_2^*, \dots, \sigma_R^*\}$, where each partial set $\sigma_i^*, 1\leq i\leq R$,
consists of the element $x_i$ and two ``$*$''s\;
return Extend($T_R$)\;
}
return {\no}\;
\caption{An {\fpt}-algorithm for M{\appc}SP}
\label{alga}}
\end{algorithm}

\begin{procedure}[h]
$Q_R=T_R$\;
\ForEach{{\appc}-subset $\sigma$ in $U$}
{\If{$\exists \sigma^*~\text{in}~Q_R$ with $reg(\sigma^*)\subseteq \sigma$ and replacing $\sigma^*$
by $\sigma$ does not make $Q_R$ invalid}{$Q_R=(Q_R\ominus \{\sigma^*\}\uplus\{\sigma\})$\;}}
\If{$Q_R$ contains no partial set}{return {\yes}\;}
\If{no $\sigma^*$ was replaced by some $\sigma$ in the above {\bf{foreach}} loop}{terminate the current procedure Extend()\;}
pick an arbitrary partial set $\sigma^*$ in $Q_R$\;
\ForEach{$c$ in $\mathbb{S}(Q_R)$}{\If{replacing a ``$*$'' in $\sigma^*$ by $c$ gives a partial set which is consistent with at least one set in $U$ and $T_R\ominus \{\sigma^*\}\uplus \{\sigma^*\ominus \{*\}\uplus \{c\}\}$ is valid}{$T_R=T_R\ominus \{\sigma^*\}\uplus \{\sigma^*\ominus \{*\}\uplus \{c\}\}$\;Extend($T_R$)\;}}
\caption{Extend($T_R$)}\label{algo:m3sp}
\end{procedure}

The algorithm correctly solves the instance since it enumerates all potential solutions. To see that the algorithm is an {\fpt}-algorithm, one can consider the algorithm as a bounded search tree algorithm, where each extension of a $T_R$ in the algorithm corresponds to a branching case. Since the size of $\mathbb{S}(Q_R)$ is bounded by $\mathappc\cdot R$, each node in the search tree has at most $\mathappc\cdot R$ children. Moreover, since each extension replaces one $*$ in $T_R$ with an element in $X$ and there are at most $(\mathappc-1)\cdot R$ many $*$'s, the depth of the search tree is bounded by $(\mathappc-1)\cdot R$. Finally, as there are at most $(r\cdot R)^R$ many $T_R$ to extend, the whole algorithm terminates in $O^*((\mathappc\cdot R)^{(\mathappc-1)\cdot R})\cdot O((r\cdot R)^R)=O^*((r\cdot R)^{r\cdot R})$ time.

Now let's turn our attention back to {\appc}-AV. To solve {\appc}-AV in {\fpt}-time, we develop a polynomial-time reduction from {\appc}-AV to M{\appc}SP. Let $\mathcal{E}=((\mathcal{C}, \Pi_{\mathcal{V}}), p \in \mathcal{C}, \Pi_{\mathcal{T}}, R)$ be an instance of {\appc}-AV.
Let $\Pi_p\sqsubseteq \Pi_{\mathcal{T}}$  be the multiset of all votes approving $p$ in $\Pi_{\mathcal{T}}$ and $\Pi_{\bar{p}}=\Pi_{\mathcal{T}}\ominus \Pi_p$.
The following observations are clearly true.

\begin{observation}\label{rbi}
If $R>|\Pi_p|$, then $\mathcal{E}$ is a {\yesins} if and only if $p$ wins the election $(\mathcal{C}, \Pi_{\mathcal{V}}\uplus \Pi_p)$.
\end{observation}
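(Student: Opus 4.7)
The plan is to prove both directions of the claimed equivalence under the hypothesis $R>|\Pi_p|$, leaning on Observation \ref{obs3} (which lets us restrict attention to solutions entirely contained in $\Pi_p$).

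For the easy direction ($\Leftarrow$): suppose $p$ is the unique winner in the election $(\mathcal{C}\cup\{p\},\,\Pi_{\mathcal{V}}\uplus\Pi_p)$. Since $R>|\Pi_p|$, the submultiset $\Pi_{\mathcal{T}'}:=\Pi_p\sqsubseteq\Pi_{\mathcal{T}}$ has size at most $R$, so it is a valid choice of votes to add, and by assumption it makes $p$ the unique winner. Hence $\mathcal{E}$ is a true-instance.

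For the harder direction ($\Rightarrow$): assume $\mathcal{E}$ is a true-instance. By Observation \ref{obs3} I may pick a solution $\Pi_{\mathcal{T}'}$ of size at most $R$ in which every added vote approves $p$; equivalently $\Pi_{\mathcal{T}'}\sqsubseteq\Pi_p$. Write $\Delta:=\Pi_p\ominus\Pi_{\mathcal{T}'}$ and let $s_x$ denote the score of a candidate $x$ in the election $(\mathcal{C}\cup\{p\},\,\Pi_{\mathcal{V}}\uplus\Pi_{\mathcal{T}'})$; by assumption $s_p>s_c$ for every $c\in\mathcal{C}$. The goal is to show that appending the additional votes $\Delta$ (so that the result is $(\mathcal{C}\cup\{p\},\,\Pi_{\mathcal{V}}\uplus\Pi_p)$) still leaves $p$ as the unique winner. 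Every vote in $\Delta$ lies in $\Pi_p$ and hence contributes exactly $1$ point to $p$, while contributing at most $1$ point to any other candidate $c$. Therefore after adding $\Delta$, the score of $p$ becomes $s_p+|\Delta|$, whereas the score of an arbitrary $c\in\mathcal{C}$ becomes $s_c+k_c$ for some integer $0\le k_c\le|\Delta|$. Because scores are integers and $s_p>s_c$, we have $s_p\ge s_c+1$, so
\[
 s_p+|\Delta|\;\ge\;s_c+1+|\Delta|\;>\;s_c+k_c,
\]
which establishes that $p$ is still strictly ahead of every other candidate, i.e., the unique winner in $(\mathcal{C}\cup\{p\},\,\Pi_{\mathcal{V}}\uplus\Pi_p)$.

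The argument is essentially a monotonicity remark: adding further $p$-approving votes can only help $p$ relative to every other candidate, by at least $0$ per added vote. The only delicate point is preserving strict inequality, which is handled by the integrality of scores ($s_p\ge s_c+1$). Consequently the main obstacle is really just bookkeeping; the condition $R>|\Pi_p|$ is used only in the $(\Leftarrow)$ direction to ensure that $\Pi_p$ itself fits within the budget, while Observation \ref{obs3} is the key ingredient of the $(\Rightarrow)$ direction.
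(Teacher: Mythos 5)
Your proof is correct and fills in exactly the reasoning the paper leaves implicit (it merely declares the observation ``clearly true''): monotonicity of adding $p$-approving votes for the forward direction via Observation~\ref{obs3}, and the budget check $|\Pi_p|<R$ for the converse. Nothing to add.
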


Due to the above observation, we can solve the problem in the case that $R>|\Pi_p|$: if $\Pi_p$ is a solution return ``Yes'', otherwise, return ``No''.

\begin{observation}\label{obsi3}
No solution of a {\yesins} contains an unregistered vote approving a candidate $c\in \mathcal{C}\setminus \{p\}$ such that $SC_{\mathcal{V}}(c)\geq SC_{\mathcal{V}}(p)+R-1$.
\end{observation}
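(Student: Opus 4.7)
\medskip

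\noindent\textbf{Proof plan for Observation~\ref{obsi3}.} The plan is to argue by contradiction using only score bookkeeping, leveraging Observation~\ref{obs3} and the budget constraint $|S'|\leq R$. Suppose, for contradiction, that some solution $S'\sqsubseteq \Pi_{\mathcal{T}}$ contains a vote $\pi_v$ with $c\in 1(v)$ and $SC_{\mathcal{V}}(c)\geq SC_{\mathcal{V}}(p)+R-1$. I will derive a contradiction to the requirement that $p$ is the unique winner in $(\mathcal{C}\cup\{p\},\Pi_{\mathcal{V}}\uplus S')$.

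First, I would upper bound the final score of $p$. By Observation~\ref{obs3}, we may assume every vote in $S'$ approves $p$, so the final score of $p$ is at most $SC_{\mathcal{V}}(p)+|S'|\leq SC_{\mathcal{V}}(p)+R$. Next, I would lower bound the final score of $c$: since at least the vote $\pi_v\in_+ S'$ approves $c$, the final score of $c$ is at least
\[
SC_{\mathcal{V}}(c)+1\;\geq\;SC_{\mathcal{V}}(p)+R.
\]
Comparing these two estimates yields $SC_{\mathcal{V}\cup S'}(c)\geq SC_{\mathcal{V}\cup S'}(p)$, which contradicts the assumption that $p$ is the \emph{unique} winner after adding $S'$.

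The only point requiring care is the use of Observation~\ref{obs3}: strictly speaking, Observation~\ref{obs3} only guarantees the \emph{existence} of a solution in which every added vote approves $p$, not that the specific solution $S'$ has this property. I would handle this by first replacing $S'$ with such a ``canonical'' solution $S''$ (with $|S''|\leq|S'|\leq R$ and $p\in 1(v)$ for every $\pi_v\in_+ S''$), observing that the statement to be contradicted is about the \emph{existence} of any solution using such a vote. Strictly speaking, however, the argument above already works directly: the upper bound $SC_{\mathcal{V}\cup S'}(p)\leq SC_{\mathcal{V}}(p)+|S'|\leq SC_{\mathcal{V}}(p)+R$ holds for \emph{every} solution $S'$ regardless of which candidates its votes approve, since each vote contributes at most one point to $p$. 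Hence no appeal to Observation~\ref{obs3} is actually needed, and the contradiction above is complete. I expect no substantive obstacle; the observation is essentially a one-line score-counting argument.
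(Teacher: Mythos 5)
Your argument is correct, and it is essentially the paper's intended justification: the paper states this observation without proof (treating it as clearly true), and the one-line score count you give — the offending candidate $c$ ends with at least $SC_{\mathcal{V}}(c)+1\geq SC_{\mathcal{V}}(p)+R$ points while $p$ ends with at most $SC_{\mathcal{V}}(p)+R$, contradicting unique winnership — is exactly the implicit reasoning. Your remark that Observation~\ref{obs3} is not actually needed (each added vote gives $p$ at most one point regardless) is also right.
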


\begin{observation}\label{rle}
If $R\leq |\Pi_p|$ and $\mathcal{E}$ is a {\yesins}, then there must be a solution containing exactly $R$ votes from $\Pi_p$ but none from $\Pi_{\bar{p}}$.
\end{observation}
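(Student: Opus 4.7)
The plan is to start with an arbitrary hypothetical solution and massage it into one consisting of exactly $R$ votes, all drawn from $\Pi_p$. Assume $\mathcal{E}$ is a true-instance, so some submultiset $S \sqsubseteq \Pi_{\mathcal{T}}$ with $|S| \leq R$ makes $p$ the unique winner in $(\mathcal{C}\cup\{p\},\,\Pi_{\mathcal{V}}\uplus S)$.

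First I would purge from $S$ every vote that does not approve $p$. Removing such a vote leaves $p$'s score unchanged while every other candidate's score is non-increasing, so $p$ remains the unique winner. Iterating discards all copies of votes from $\Pi_{\bar{p}}$ present in $S$, yielding a solution $S' \sqsubseteq \Pi_p$ of some size $k \leq R$.

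Next I would grow $S'$ up to exactly $R$ votes by inserting elements of $\Pi_p \ominus S'$ one at a time. The crux is a monotonicity check: inserting any vote $\pi$ with $p \in 1(\pi)$ raises $p$'s score by exactly one while raising any other candidate's score by at most one, so the strict gap of at least one in favour of $p$ is preserved after the insertion. Since $R \leq |\Pi_p|$ and $k \leq R$, the required $R-k$ additional copies are available in $\Pi_p \ominus S'$, producing a submultiset $S^{\ast} \sqsubseteq \Pi_p$ with $|S^\ast| = R$ under which $p$ is still the unique winner, as claimed. The only subtle step in either direction is this monotonicity; it rests on the simple fact that a single vote contributes at most one point to any candidate's score, so the positive lead of $p$ cannot erode when a $p$-approving vote is added or a non-$p$-approving vote is removed.
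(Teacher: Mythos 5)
Your proof is correct, and it is exactly the monotonicity argument the paper has in mind when it states this observation as ``clearly true'' without proof (the same reasoning behind Observation~\ref{obs3}): dropping votes that do not approve $p$ cannot hurt $p$'s strict lead, and adding a $p$-approving vote raises $p$'s score by one while raising any rival's by at most one, with $R\leq|\Pi_p|$ guaranteeing enough votes to pad up to exactly $R$. No gaps; nothing further is needed.
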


Let $C'=\{c\in \mathcal{C}\setminus \{p\} \mid SC_{\mathcal{V}}(c)\geq SC_{\mathcal{V}}(p)+R-1\}$. Due to Observation~\ref{obsi3}, we can safely remove all votes that approve some candidate in $C'$ in $\Pi_{\mathcal{T}}$. Based on Observation~\ref{rle}, we reduce $\mathcal{E}$ to an M{\appc}SP instance as follows: the universal set is $X=(\mathcal{C}\setminus \{p\})\setminus C'$; the mapping $f$ is defined as
$f(c)=SC_{\mathcal{V}}(p)+R-SC_{\mathcal{V}}(c)-1$ for every $c\in X$. Moreover, for every subset $A\subseteq \mathcal{C}\setminus C'$ such that there are exactly $\ell$ unregistered votes $\pi_v$ such that $1(v)\cap C'=\emptyset$ and $A\cup \{p\}=1(v)$, there are $\ell$ copies of $A$ in the multiset $U$.
\end{proof}

Finally, we arrive at control by deleting candidates.
Faliszewski, Hemaspaandra and Hemaspaandra~\cite{DBLP:journals/ai/FaliszewskiHH14} proved that control by deleting candidates for 1-Approval is {\npc} when restricted to Swoon-SP elections. Since Swoop-SP elections are a special case of 2-peaked elections, 1-DC-{\kpeak} where $\mathkpeak\geq 2$ is {\npc}. We strengthen the {\npcns} of 1-DC-3 by proving that the problem is {\wah} with the number of deleted candidates as the parameter.

\begin{theorem}\label{theorem:deletingcandidates}
1-DC-3 is {\wah} with respect to the number of deleted candidates.
\end{theorem}
\begin{proof}
We prove the theorem by an {\fpt}-reduction from the {\sc{Independent Set}} problem which is {\wah}~\cite{fellows99}.
For a linear order $\vec{A}=(a_1,a_2, \dots ,a_n)$ over $A=\{a_1,a_2, \dots ,a_n\}$ and a subset $B\subseteq A$, denote by $\vec{A}\setminus B$ the linear order obtained from $\vec{A}$ by deleting all elements in $B$. 
For an instance $\mathcal{E}=(G=(V,E),\kappa)$ of the {\sc{Independent Set}} problem we construct an instance $\mathcal{E}'$ of 1-DC-3 as follows.

{\bf{Candidates:}} $V\cup \{p,a,a_1,a_2, \dots ,a_{\kappa},b,b_1,b_2, \dots ,b_{\kappa}\}$, where $\{p,a,a_1, \dots ,a_{\kappa},b,b_1, \dots ,b_{\kappa}\}$ is disjoint from $V$.

{\bf{3-Harmonious Order:}} Let $\mathcal{F}=(c_1,c_2, \dots ,c_n)$ be an (arbitrary) order of $V$. Then, the 3-harmonious order $\mathcal{L}$ is given by $(b_{\kappa},b_{\kappa-1}, \dots ,b_1,b,p,a,a_1,a_2, \dots ,a_{\kappa},c_1,c_2, \dots,c_n)$.

{\bf{Votes:}} Let $m=|E|$. There are 7 submultisets of votes.
\begin{enumerate}
\item $2m-1$ votes defined as $(\mathcal{L}[a,c_n],\mathcal{L}[p,b_{\kappa}])$; 
\item  $2m$ votes defined as $(\mathcal{L}[p,c_n], \mathcal{L}[b,b_{\kappa}])$; 
\item $2m+\kappa-1$ votes defined as $(\mathcal{L}[b,b_{\kappa}],\mathcal{L}[p,c_n])$; 
\item for each edge $(c_i,c_j)\in E(G)$ with $i<j$, create one vote defined as $(c_i,c_j,\mathcal{L}[a,a_{\kappa}],\mathcal{L}[p,b_{\kappa}],\mathcal{F}\setminus \{c_i,c_j\})$; 
\item for each vertex $c_i$, create one vote defined as $(c_i,\mathcal{L}[p,a_{\kappa}],\mathcal{L}[b,b_{\kappa}],\mathcal{F}\setminus \{c_i\})$ 
and one vote defined as $(c_i,\mathcal{L}[a,a_{\kappa}],\mathcal{L}[p,b_{\kappa}],\mathcal{F}\setminus\{c_i\})$; 
\item $\kappa+1$ votes defined as $(\mathcal{L}[a_1,c_n],\mathcal{L}[a,b_{\kappa}])$; 
\item one vote defined as $(\mathcal{L}[b_1,b_{\kappa}],\mathcal{L}[b,c_n])$.  
\end{enumerate}
It is easy to verify that all votes are 3-peaked with respect to $\mathcal{L}$.

{\bf{Number of Deleted Candidates:}} $R=\kappa$.

$(\Leftarrow:)$ It is easy to verify that $\mathcal{E}$ is a {\yesins} implies $\mathcal{E}'$ is a {\yesins}: for every independent set $S$ of size $\kappa$, deleting all candidates in $S$ from the election  makes $p$ the unique winner.

$(\Rightarrow:)$ Suppose that $\mathcal{E}'$ is a {\yesins} and $S'$ is a solution with $|S'|\leq \kappa$. Observe that $b\not\in S'$, since otherwise all candidates in $\{b_1,b_2, \dots ,b_{\kappa}\}$ must be deleted to make $p$ the unique winner, implying that $|S'|\geq |\{b,b_1, \dots ,b_{\kappa}\}|= \kappa+1$, a contradiction. The same argument applies to the candidate $a$. However, in order to beat $b$, exactly $\kappa$ candidates from $V$ must be deleted so that $p$ can get extra $\kappa$ points from the constructed votes in Case~5. Since $|S'|\leq \kappa$, $S'$ must be a subset of $V$. Moreover, no two candidates in $S'$ are adjacent in the graph $G$, since otherwise the candidate $a$ would get at least one extra point from the constructed votes in Case~4, and $p$ cannot be the unique winner. Thus, $S'$ is an independent set of $G$.
\end{proof}

\section{Conclusion}
 {\kpeak}-peaked elections are a natural generalization of single-peaked elections where each vote is allowed to have at most {\kpeak} peaks with respect to an order over the candidates. Hence, 1-peaked elections are exactly single-peaked elections and ${\frac{\lceil m\rceil}{2}}$-peaked elections are general elections, where $m$ is the number of candidates. We first studied the complexity of control by adding/deleting votes/candidates for $r$-Approval in {\kpeak}-peaked elections and achieved both polynomial-time solvability results and {\npcns} results. Our study leads to many dichotomy results for the problems considered in this paper, with respect to the values of {\kpeak} and $r$. Then, we presented some results concerning the parameterized complexity of these problems in general elections as well as in {\kpeak}-peaked elections, with respect to the solution size. All our results hold for both the unique-winner model and the nonunique-winner model. In addition, our {\npcns} results apply to Approval and SP-AV as well. See Table~\ref{tableresults} for a summary.

In an attempt to derive an {\npcns} result and an fixed-parameter tractability result, we obtained two byproducts which are of independent interest. First, we proved that every graph of maximum degree 3 admits a 2-interval representation such that every 2-interval contains a trivial interval and, moreover, every two 2-intervals may only intersect at their endpoints. Second, we presented an {\fpt}-algorithm for a generalized {\appc}-{\sc{Set Packing}} problem, where each element in the given universal set is allowed to occur in more than one {\appc}-subset in the solution.

A direction for future research would be to study various control problems for further voting systems in {\kpeak}-peaked elections and other generalizations of single-peaked elections such as $k$-dimensional single-peaked elections, $k$-swap single-peaked elections, etc. Very recently, Yang~\cite{AAMAS17YangBordaSinlgePeaked} proved that control by adding/deleting votes for Borda in {\kpeak}-peaked elections is {\npc} even when {\kpeak}$=1$ (i.e., single-peaked elections). 

\section*{Acknowledgments}
We thank the anonymous reviewers of JCSS, AAMAS 2014 and M-PREF 2013 for their constructive comments.

\bibliographystyle{plain}

\begin{thebibliography}{99}
\bibitem{arrow50}
K.~J. Arrow.
\newblock A difficulty in the concept of social welfare.
\newblock {\em J. Polit. Econ.}, 58(4):328--346, 1950.

\bibitem{AAMAS15AzizGGMMW}
H.~Aziz, S.~Gaspers, J.~Gudmundsson, S.~Mackenzie, N.~Mattei, and T.~Walsh.
\newblock Computational aspects of multi-winner approval voting.
\newblock In {\em AAMAS}, pages 107--115, 2015.

\bibitem{Bartholdi92howhard}
J.~J. {Bartholdi III}, C.~A. Tovey, and M.~A. Trick.
\newblock How hard is it to control an election?
\newblock {\em Math. Comput. Model.}, 16(8-9):27--40, 1992.

\bibitem{DBLP:journals/tcs/BetzlerU09}
N.~Betzler and J.~Uhlmann.
\newblock Parameterized complexity of candidate control in elections and
  related digraph problems.
\newblock {\em Theor. Comput. Sci.}, 410(52):5425--5442, 2009.

\bibitem{Black48}
D.~Black.
\newblock On the rationale of group decision-making.
\newblock {\em J. Polit. Econ.}, 56(1):23--34, 1948.

\bibitem{Brams76}
S.~J. Brams and M.~R. Sanver.
\newblock Critical strategies under approval voting: Who gets ruled in and
  ruled out.
\newblock {\em Elect. Stud.}, 25(2):287--305, 2006.

\bibitem{BrandtBHH2015JAIRbypassingsinglepeakelectionAAAI10}
F.~Brandt, M.~Brill, E.~Hemaspaandra, and L.~A. Hemaspaandra.
\newblock Bypassing combinatorial protections: Polynomial-time algorithms for
  single-peaked electorates.
\newblock {\em J. Artif. Intell. Res.}, 53:439--496, 2015.

\bibitem{DBLP:journals/mss/BredereckCW16}
R.~Bredereck, J.~Chen, and G.~J. Woeginger.
\newblock Are there any nicely structured preference profiles nearby?
\newblock {\em Math. Soc. Sci.}, 79:61--73, 2016.

\bibitem{DBLP:journals/jair/BredereckFNT16}
R.~Bredereck, P.~Faliszewski, R.~Niedermeier, and N.~Talmon.
\newblock Large-scale election campaigns: Combinatorial shift bribery.
\newblock {\em J. Artif. Intell. Res.}, 55:603--652, 2016.

\bibitem{DBLP:journals/talg/ChenLLSZ12}
J.~Chen, Y.~Liu, S.~Lu, S.-H. Sze, and F.~Zhang.
\newblock Iterative expansion and color coding: An improved algorithm for
  3{D}-matching.
\newblock {\em {ACM} T. Algorithms.}, 8(1), 2012.
\newblock Article 6.

\bibitem{Cooter2002}
R.~D. Cooter.
\newblock {\em The Strategic Constitution}.
\newblock Princeton University Press, 2002.

\bibitem{DBLP:conf/ijcai/CornazGS13}
D.~Cornaz, L.~Galand, and O.~Spanjaard.
\newblock {K}emeny elections with bounded single-peaked or single-crossing
  width.
\newblock In {\em IJCAI}, pages 76--82, 2013.

\bibitem{DBLP:conf/iwpec/DehneFRS04}
F.~Dehne, M.~Fellows, F.~Rosamond, and P.~Shaw.
\newblock Greedy localization, iterative compression, modeled crown reductions:
  New {FPT} techniques, an improved algorithm for set splitting, and a novel
  2$k$ kernelization for vertex cover.
\newblock In {\em IWPEC}, pages 271--280, 2004.

\bibitem{Demangesinglepeaked82}
G.~Demange.
\newblock Single-peaked orders on a tree.
\newblock {\em Math. Soc. Sci.}, 3(3):389--396, 1983.

\bibitem{DBLP:journals/tcs/DeyMN16}
P.~Dey, N.~Misra, and Y.~Narahari.
\newblock Kernelization complexity of possible winner and coalitional
  manipulation problems in voting.
\newblock {\em Theor. Comput. Sci.}, 616:111--125, 2016.

\bibitem{fellows99}
R.~G. Downey and M.~R. Fellows.
\newblock {\em Parameterized Complexity}.
\newblock Springer, 1999.

\bibitem{Egan2014}
P.~J. Egan.
\newblock Do something politics and double-peaked policy preferences.
\newblock {\em J. Polit.}, 76(2):333--349, 2014.

\bibitem{DBLP:journals/jcss/ErdelyiFRS15}
G.~Erd{\'{e}}lyi, M.~R. Fellows, J.~Rothe, and L.~Schend.
\newblock Control complexity in {B}ucklin and fallback voting: {A} theoretical
  analysis.
\newblock {\em J. Comput. Syst. Sci.}, 81(4):632--660, 2015.

\bibitem{Erdelyi2017}
G.~Erd\'{e}lyi, M.~Lackner, and A.~Pfandler.
\newblock Computational aspects of nearly single-peaked electorates.
\newblock {\em J. Artif. Intell. Res.}, 58:297--337, 2017.

\bibitem{DBLP:journals/mlq/ErdelyiNR09}
G.~Erd{\'e}lyi, M.~Nowak, and J.~Rothe.
\newblock Sincere-strategy preference-based approval voting fully resists
  constructive control and broadly resists destructive control.
\newblock {\em Math. Log. Q.}, 55(4):425--443, 2009.

\bibitem{DBLP:conf/ecai/EscoffierLO08}
B.~Escoffier, J.~Lang, and M.~{\"O}zt{\"u}rk.
\newblock Single-peaked consistency and its complexity.
\newblock In {\em ECAI}, pages 366--370, 2008.

\bibitem{DBLP:journals/ai/FaliszewskiHH14}
P.~Faliszewski, E.~Hemaspaandra, and L.~A. Hemaspaandra.
\newblock The complexity of manipulative attacks in nearly single-peaked
  electorates.
\newblock {\em Artif. Intell.}, 207:69--99, 2014.

\bibitem{DBLP:journals/iandc/FaliszewskiHHR11}
P.~Faliszewski, E.~Hemaspaandra, L.~A. Hemaspaandra, and J.~Rothe.
\newblock The shield that never was: Societies with single-peaked preferences
  are more open to manipulation and control.
\newblock {\em Inf. Comput.}, 209(2):89--107, 2011.

\bibitem{FH+04}
M.~Fellows, P.~Heggernes, F.~Rosamond, C.~Sloper, and J.~A. Telle.
\newblock Exact algorihtms for finding $k$ disjoint triangles in an arbitrary
  graph.
\newblock In {\em WG}, pages 235--244, 2004.

\bibitem{DBLP:journals/algorithmica/FellowsKNRRSTW08}
M.~R. Fellows, C.~Knauer, N.~Nishimura, P.~Ragde, F.~A. Rosamond, U.~Stege,
  D.~M. Thilikos, and S.~Whitesides.
\newblock Faster fixed-parameter tractable algorithms for matching and packing
  problems.
\newblock {\em Algorithmica.}, 52(2):167--176, 2008.

\bibitem{FilosAAAI15LZ}
A.~Filos-Ratsikas, M.~Li, J.~Zhang, and Q.~Zhang.
\newblock Facility location with double-peaked preferences.
\newblock In {\em AAAI}, pages 893--899, 2015.

\bibitem{sean09}
S.~Gailmard, J.~W. Patty, and E.~M. Penn.
\newblock Arrow's theorem on single-peaked domains.
\newblock In E.~Aragon${\grave{\text{e}}}$s, C.~Bevi${\acute{\text{a}}}$,
  H.~Llavador, and N.~Schofield, editors, {\em The Political Economy of
  Democoracy}, pages 235--342. Fundaci\'{o}n {BBVA}, 2009.

\bibitem{DBLP:journals/siamam/GareyJ77}
M.~R. Garey and D.~S. Johnson.
\newblock The rectilinear {S}teiner tree problem is {NP}-complete.
\newblock {\em {SIAM} J. Appl. Math.}, 32(4):826--834, 1977.

\bibitem{garey}
M.~R. Garey and D.~S. Johnson.
\newblock {\em Computers and Intractability: A Guide to the Theory of
  {NP}-Completeness}.
\newblock W. H. Freeman, New York, 1979.

\bibitem{DBLP:journals/tcs/GareyJS76}
M.~R. Garey, D.~S. Johnson, and L.~J. Stockmeyer.
\newblock Some simplified {NP}-complete graph problems.
\newblock {\em Theor. Comput. Sci.}, 1(3):237--267, 1976.

\bibitem{DBLP:journals/ai/HemaspaandraHR07}
E.~Hemaspaandra, L.~A. Hemaspaandra, and J.~Rothe.
\newblock Anyone but him: The complexity of precluding an alternative.
\newblock {\em Artif. Intell.}, 171(5-6):255--285, 2007.

\bibitem{DBLP:journals/jal/JiaZC04setpacking}
W.~Jia, C.~Zhang, and J.~Chen.
\newblock An efficient parameterized algorithm for {$m$}-{set packing}.
\newblock {\em J. Algorithms.}, 50(1):106--117, 2004.

\bibitem{Keil99}
J.~M. Keil.
\newblock On the complexity of scheduling tasks with discrete starting times.
\newblock {\em Oper. Res. Lett.}, 12(5):293--295, 1992.

\bibitem{DBLP:conf/icaart/Lin11}
A.~P. Lin.
\newblock The complexity of manipulating $k$-{A}pproval elections.
\newblock In {\em ICAART (2)}, pages 212--218, 2011.
\newblock http://arxiv.org/abs/1005.4159.

\bibitem{DBLP:journals/tcs/LiuFZL09}
H.~Liu, H.~Feng, D.~Zhu, and J.~Luan.
\newblock Parameterized computational complexity of control problems in voting
  systems.
\newblock {\em Theor. Comput. Sci.}, 410(27-29):2746--2753, 2009.

\bibitem{DBLP:journals/ipl/LiuZ10}
H.~Liu and D.~Zhu.
\newblock Parameterized complexity of control problems in {M}aximin election.
\newblock {\em Inf. Process. Lett.}, 110(10):383--388, 2010.

\bibitem{DBLP:conf/iwpec/LiuLCS06}
Y.~Liu, S.~Lu, J.~Chen, and S.-H. Sze.
\newblock Greedy localization and color-coding: Improved matching and packing
  algorithms.
\newblock In {\em IWPEC}, pages 84--95, 2006.

\bibitem{AAMAS15MisraNS}
N.~Misra, A.~Nabeel, and H.~Singh.
\newblock On the parameterized complexity of {M}inimax {A}pproval voting.
\newblock In {\em AAMAS}, pages 97--105, 2015.

\bibitem{DBLP:journals/tcs/Picouleau94}
C.~Picouleau.
\newblock Complexity of the {H}amiltonian cycle in regular graph problem.
\newblock {\em Theor. Comput. Sci.}, 131(2):463--473, 1994.

\bibitem{DBLP:journals/cj/PittKSA06}
J.~Pitt, L.~Kamara, M.~J. Sergot, and A.~Artikis.
\newblock Voting in multi-agent systems.
\newblock {\em Comput. J.}, 49(2):156--170, 2006.

\bibitem{DBLP:conf/hci/Popescu13b}
G.~Popescu.
\newblock Group recommender systems as a voting problem.
\newblock In {\em OCSC/HCII}, pages 412--421, 2013.

\bibitem{DBLP:journals/tcs/SkowronYFE15}
P.~Skowron, L.~Yu, P.~Faliszewski, and E.~Elkind.
\newblock The complexity of fully proportional representation for
  single-crossing electorates.
\newblock {\em Theor. Comput. Sci.}, 569:43--57, 2015.

\bibitem{DBLP:conf/aaai/Walsh07}
T.~Walsh.
\newblock Uncertainty in preference elicitation and aggregation.
\newblock In {\em AAAI}, pages 3--8, 2007.

\bibitem{DBLP:conf/atal/Xia13}
L.~Xia.
\newblock Designing social choice mechanisms using machine learning.
\newblock In {\em AAMAS}, pages 471--474, 2013.

\bibitem{DBLP:conf/ecai/Yang14}
Y.~Yang.
\newblock Election attacks with few candidates.
\newblock In {\em ECAI}, pages 1131--1132, 2014.

\bibitem{AAMAS15Yangmanipulationspwidth}
Y.~Yang.
\newblock Manipulation with bounded single-peaked width: A parameterized study.
\newblock In {\em AAMAS}, pages 77--85, 2015.

\bibitem{AAMAS17YangBordaSinlgePeaked}
Y.~Yang.
\newblock On the complexity of {B}orda control in single-peaked elections.
\newblock In {\em AAMAS}, 2017.
\newblock To appear.

\bibitem{Yangaamas14b}
Y.~Yang and J.~Guo.
\newblock The control complexity of {$r$}-{A}pproval: from the single-peaked
  case to the general case.
\newblock In {\em AAMAS}, pages 621--628, 2014.

\bibitem{Yangaamas14a}
Y.~Yang and J.~Guo.
\newblock Controlling elections with bounded single-peaked width.
\newblock In {\em AAMAS}, pages 629--636, 2014.

\bibitem{DBLP:journals/jco/YangG17}
Y.~Yang and J.~Guo.
\newblock Possible winner problems on partial tournaments: {A} parameterized
  study.
\newblock {\em J. Comb. Optim.}, 33(3):882--896, 2017.
\end{thebibliography}


\newpage
\section*{Appendix}
\def\x{0.75}

{\bf{Proof of Lemma~\ref{lemma:ids}}}

To prove Lemma \ref{lemma:ids}, we prove the following claim first. For a 2-interval representation $\mathcal{I}(G)$ of a 2-interval graph $G$,
and a vertex $u\in V(G)$, we call intervals
in $\mathcal{I}_u(G)$ {\it{$u$-intervals}}. An interval $I$ is a {\it{left-most}} (resp. {\it{right-most}})
interval if there is no other interval $I'$ such that $l(I')<l(I)$ (resp. $r(I')>r(I)$).
For $S\subseteq V(G)$, let $G\setminus S$ be the graph obtained from $G$ by removing all vertices in $S$.
For two graphs $G=(V,E)$ and $G'=(V',E')$, let $G\cup G'$ be the graph $(V\cup V', E\cup E')$.

A {\it{path}} between two vertices $v_1,v_2$ in a graph is a sequence $(v_1, \dots ,v_t)$ of distinct vertices such
that there is an edge between every two consecutive vertices in the graph. A graph is {\it{connected}} if there is a path
between every pair of vertices. In addition, a graph with one vertex is also considered as connected.
A {\it{component}} of a graph $G$ is a maximal
connected induced subgraph of $G$.
A {\it{cycle}} is a sequence $(v_1, \dots ,v_t)$ of distinct vertices such that $(v_1, \dots ,v_t)$ is a path and there is an edge between $v_1$ and $v_t$.
\medskip

\begin{figure}
\begin{center}
\includegraphics[width=\textwidth]{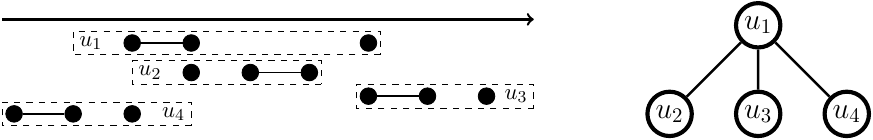}
\caption{This figure shows a 2-interval representation of the graph shown in {\myfig{figgraphproperty}}, obtained by turning over the 2-interval representation shown in
{\myfig{figgraphproperty}}. The left-most interval in the 2-interval representation shown in {\myfig{figgraphproperty}} is a $u_3$-interval.
In this figure, this $u_3$-interval is the right-most interval.}
\label{fig:intervalturnningover}
\end{center}
\end{figure}

\noindent\textbf{Claim.}
Let $G=(V,E)$ be a graph of maximum degree 3 and $v$ be any arbitrary vertex in $G$. Then, there is a 2-interval representation $\mathcal{I}(G,v)$ of $G$ such that the following four conditions hold.
\begin{enumerate}
\item the left-most intervals contain a $v$-interval;
\item each degree-($\leq 1$) vertex is represented by a trivial interval;
\item each degree-2 vertex $w$ is either represented by two trivial intervals, or by a closed interval $[x_1, x_2]$ with $x_1<x_2$ such that
the open interval $(x_1, x_2)$ does not intersect any $u$-interval for all $u\in V(G)\setminus \{w\}$;
\item for each degree-3 vertex $w\in V$, $\mathcal{I}_w(G,v)=\{I^1_w, I^2_w\}$ is one of the following two cases:
   \begin{enumerate}
      \item $I_w^1=[x_1, x_1]$, $I_w^2=[x_2, x_3]$, $x_1<x_2<x_3$ and
            $\nexists u\in V(G)\setminus \{w\}$ such that $r(I_u)\in (x_2, x_3)~\text{or}~l(I_u)\in (x_2, x_3)$;
      \item $I_w^1=[x_1, x_2]$, $I_w^2=[x_3, x_3]$, $x_1<x_2<x_3$ and
            $\nexists u\in V(G)\setminus \{w\}$ such that $r(I_u)\in (x_1, x_2)~\text{or}~l(I_u)\in (x_1, x_2)$,
    \end{enumerate}where $\mathcal{I}_w(G,v)$ is the 2-interval corresponding to $w$ in $\mathcal{I}(G,v)$ and $I_u$ is a $u$-interval in $\mathcal{I}(G,v)$.
\end{enumerate}

\begin{proof}
We prove the claim by induction on $n=|V(G)|$. The claim is clearly true for $n=1$. Notice that due to symmetry,
if there is a 2-interval representation where the left-most intervals contain a $v$-interval, then we can obtain a 2-interval representation
where the right-most intervals contain a $v$-interval by turning over the original 2-interval representation.
See~{\myfig{fig:intervalturnningover}} for an illustration.
Suppose that the claim is true for all graphs of maximum degree 3 with less than $n$ vertices.
Consider $G$ to be a graph with $n$ vertices.
Let $v$ be a vertex in $G$ as stated in the claim. 

Consider first the case that $v$ is a degree-3 vertex in $G$. Let $N_G(v)=\{a,b,c\}$.
We distinguish between the following cases.
\smallskip

\textbf{Case 1.} $v$ does not belong to any cycle in $G$.

Let $C_a, C_b$ and $C_c$ be the three components containing
$a, b$ and $c$ in $G\setminus \{v\}$, respectively.
By induction, we know that $C_a, C_b$ and $C_c$ have 2-interval
representations $\mathcal{I}(C_a,a), \mathcal{I}(C_b,b)$
and $\mathcal{I}(C_c,c)$, respectively, which satisfy all conditions stated in the above claim.
Moreover, let $\bar{G}=G\setminus (\{v\}\cup V(C_a\cup C_b\cup C_c))$ and * be any arbitrary vertex in $\bar{G}$ if there are any.
Let $\mathcal{I}(\bar{G},*)$ be a 2-interval representation of $\bar{G}$ satisfying all conditions stated in the claim (if $\bar{G}$ has no vertex, we ignore $\mathcal{I}(\bar{G},*)$ in all figures below).

In the following, we show how to construct a 2-interval representation $\mathcal{I}(G, v)$ satisfying all conditions stated in the above
claim by creating intervals for $v,a,b,c$ and combining them with $\mathcal{I}(C_a,a), \mathcal{I}(C_b,b), \mathcal{I}(C_c,c)$
and $\mathcal{I}(\bar{G},*)$, in a way as shown in the following figures
(in all figures, we ignore the real line for the sake of simplicity). Notice that in some cases of the construction, we turn over the 2-interval representations of some components of $G$.
We further distinguish between the following subcases mainly based on the shapes of the left-most intervals of $\mathcal{I}(C_a,a), \mathcal{I}(C_b,b)$
and $\mathcal{I}(C_c,c)$. Due to symmetry, it suffices to consider the following subcases.
\smallskip

{\textbf{Case 1.1.}} For each $x\in \{a,b,c\}$, the left-most $x$-interval in $\mathcal{I}(C_x,x)$ is a non-trivial interval. Clearly, in this case $a,b,c$ are degree-3 vertices in $G$. We can then construct $\mathcal{I}(G,v)$ as shown in the following figure.

\begin{center}
\includegraphics[width=\x\textwidth]{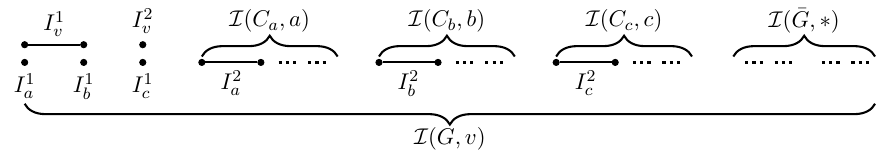}
\end{center}
\medskip

For ease of exposition, in all following cases, we only give the preconditions of each case and show $\mathcal{I}(G,v)$ in a figure, which is a refined combination of
the 2-interval representations of some components of $G$.
\medskip

{\textbf{Case 1.2.}} The left-most $b$-interval and the left-most $c$-internal in $\mathcal{I}(C_b,b)$ and $\mathcal{I}(C_c,c)$, respectively,
are non-trivial intervals, and the left-most $a$-interval in $\mathcal{I}(C_a,a)$ is a trivial interval. Clearly, both $b$ and $c$ are degree-3 vertices in $G$.
Assume first that $a$ is a degree-2 or degree-3 vertex in $G$.

\begin{center}
\includegraphics[width=\x\textwidth]{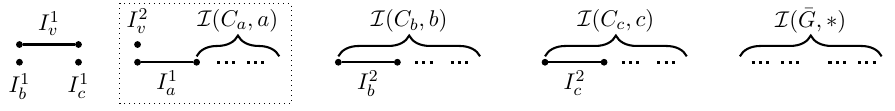}
\end{center}

If $a$ is a degree-1 vertex in $G$, according to the above claim, $a$ can only be represented by a trivial interval in $\mathcal{I}(G, v)$. In this case,
we replace the part in the rectangle in the above figure with the following one.

\begin{center}
\includegraphics[width=\x\textwidth]{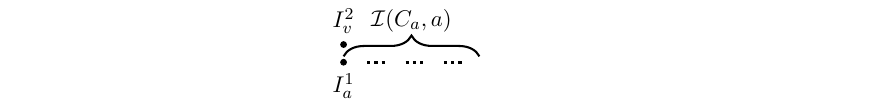}
\end{center}

{\textbf{Case 1.3.}} The left-most $c$-interval  in $\mathcal{I}(C_c,c)$ is a non-trivial interval, and the left-most $a$-interval and left-most $b$-interval in  $\mathcal{I}(C_a,a)$ and
$\mathcal{I}(C_b,b)$, respectively, are trivial intervals. Assume first that $a$ and $b$ are degree-2 or degree-3 vertices in $G$.
In the following figure, $\mathcal{I}(C_a,a)$ is turned over.

\begin{center}
\includegraphics[width=\x\textwidth]{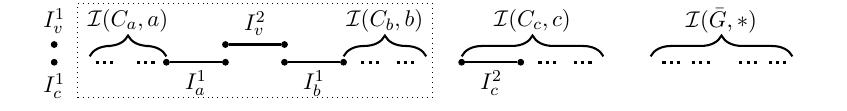}
\end{center}

In the case that $a$ or $b$, or both are degree-1vertices in $G$, we replace the part in the rectangle in the above figure with one of the following ones
(from left to right the figures correspond to the cases (1) both $a$ and $b$ are degree-1 vertices;
(2) only $a$ is a degree-1 vertex; (3) only $b$ is a degree-1 vertex).
\begin{center}
\begin{minipage}{0.33\textwidth}
\begin{center}
\includegraphics[width=0.8\textwidth]{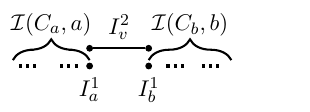}
\end{center}
\end{minipage}\begin{minipage}{0.33\textwidth}
\includegraphics[width=0.8\textwidth]{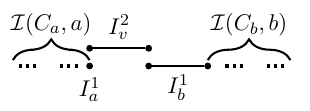}
\end{minipage}\begin{minipage}{0.33\textwidth}
\begin{center}
\includegraphics[width=0.8\textwidth]{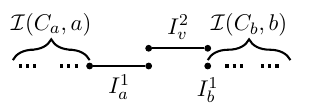}
\end{center}
\end{minipage}
\end{center}
\medskip

{\textbf{Case 1.4.}} For each $x\in \{a,b,c\}$, the left-most $x$-interval in $\mathcal{I}(C_x,x)$ is a trivial interval.
The following figure is for the case that $a,b,c$ are degree-2 or degree-3 vertices in $G$.
The case that some of $a,b,c$ or all of them are degree-1 vertices in $G$ are dealt with similar to Cases 1.2 and 1.3.

\begin{center}
\includegraphics[width=\x\textwidth]{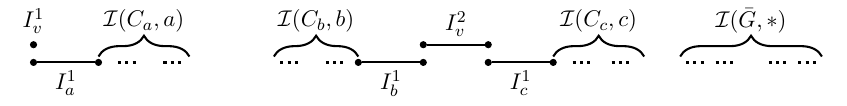}
\end{center}
\smallskip

\textbf{Case 2.} $v$ is in a cycle in $G$.

Obviously, two of $\{a,b,c\}$ are the neighbors of $v$ in the cycle. Without loss of generality, assume that $a$ and $b$ are the two neighbors of $v$
in the cycle denoted by $O=(v, a, u_1, u_2, \dots, u_t, b)$. We can construct $\mathcal{I}(O,v)$ as follows.

\begin{center}
\includegraphics[width=\x\textwidth]{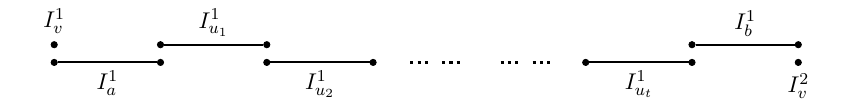}
\end{center}

Let $G'$ be the graph obtained from $G$ by deleting $v$ and all edges in the cycle $O$.
By the induction, there is a 2-interval representation $\mathcal{I}(G',c)$ of $G'$ which satisfies all conditions stated in the claim.
In particular, each vertex in $\{a, u_1, \dots ,u_t, b\}$
is represented by a trivial interval in $\mathcal{I}(G', c)$.
Then, by a similar method as used above we can construct $\mathcal{I}(G,v)$ by combining $\mathcal{I}(O,v)$ and $\mathcal{I}(G',c)$. We have three subcases here.
\smallskip

\textbf{Case 2.1.} The left-most $c$-interval in $\mathcal{I}(G', c)$ is a non-trivial interval. In this case, $c$ must be a degree-3 vertex in $G$.

\begin{center}
\includegraphics[width=\x\textwidth]{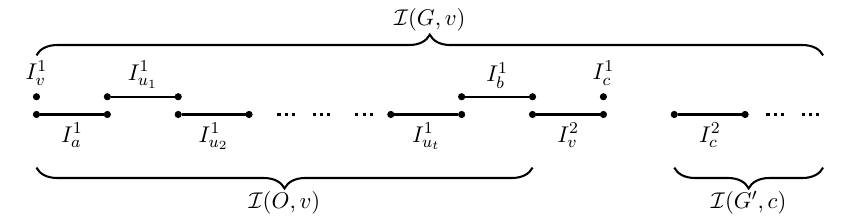}
\end{center}
\smallskip

\textbf{Case 2.2.} The left-most $c$-interval in $\mathcal{I}(G', c)$ is a trivial interval and $c$ is a degree-2 or degree-3 vertex in $G$.

\begin{center}
\includegraphics[width=\x\textwidth]{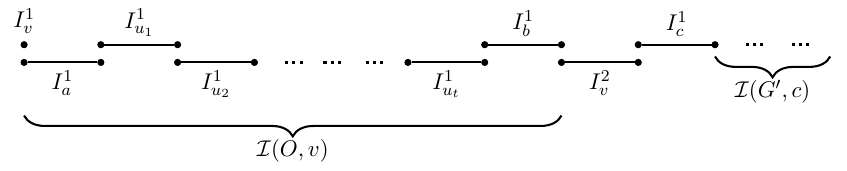}
\end{center}

\textbf{Case 2.3.} The left-most $c$-interval in $\mathcal{I}(G', c)$ is a trivial interval and $c$ is a degree-1 vertex in $G$.
\begin{center}
\includegraphics[width=0.85\textwidth]{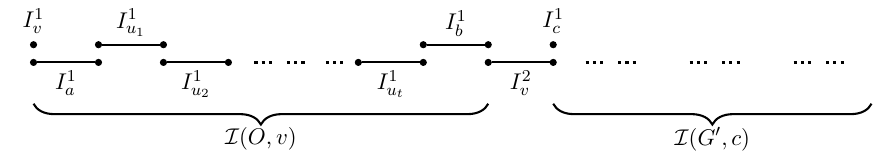}
\end{center}

Consider now the case that $v$ is a degree-2 vertex, say $N_G(v)=\{a,b\}$. The
proof can be much simpler. In particular, if $v$ is not in any cycle in $G$, we construct $\mathcal{I}(G, v)$ similar to Case 1 by distinguish between several subcases, each of which
can be obtained from one of Cases 1.1-1.3 by removing $\mathcal{I}(C_c,c)$ and $c$-intervals. In addition, if a trivial $v$-interval intersects a $c$-interval, we remove
this $v$-interval too. Moreover, if a non-trivial $v$-interval intersects a $c$-interval, we remove the edge of the $v$-interval and the common endpoint of the $v$-interval and the $c$-interval.
For instance, if the left-most $a$-interval in $\mathcal{I}(C_a, a)$ is a trivial interval, the left-most $b$-interval in $\mathcal{I}(C_b, b)$ is a non-trivial interval, and $a$ is a degree-2 vertex
in $G$, we construct $\mathcal{I}(G, v)$ as follows (obtained from Case 1.2 by performing the above operations). Here, $\mathcal{I}(C_x,x)$ where $x\in \{a,b\}$ is as defined in Case 1.

\begin{center}
\includegraphics[width=\x\textwidth]{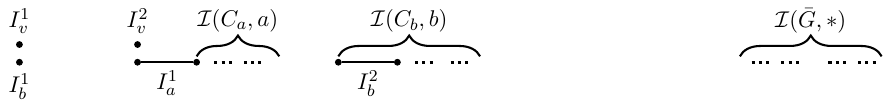}
\end{center}

If $v$ is in some cycle $O=(v, a, u_1, u_2, \dots, u_t, b)$, let $G'$ be obtained from $G$ by deleting $v$ and all edges in $O$. Let $c'$ be any vertex in $V\setminus \{v,a,b\}$.
Then, we can construct $\mathcal{I}(G, v)$ by combining $\mathcal{I}(O,v)$ and $\mathcal{I}(G',c')$ without any intersection,
where $\mathcal{I}(O,v)$ is as shown in Case 2 and $\mathcal{I}(G',c')$ is a 2-interval representation of $G'$ satisfying all conditions in the claim.

For the case that $v$ is a degree-1 vertex in $G$, let $N_G(v)=\{a\}$ and $G'=G\setminus \{v\}$. The following 3 figures show $\mathcal{I}(G,v)$ in all possible cases.

\begin{center}
\begin{minipage}{0.33\textwidth}
\begin{center}
\includegraphics[width=0.6\textwidth]{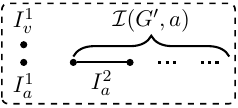}
\end{center}
\end{minipage}\begin{minipage}{0.33\textwidth}
\begin{center}
\includegraphics[width=0.6\textwidth]{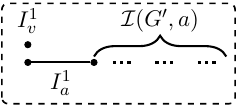}
\end{center}
\end{minipage}\begin{minipage}{0.33\textwidth}
\begin{center}
\includegraphics[width=0.6\textwidth]{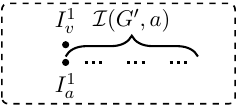}
\end{center}
\end{minipage}
\end{center}

Finally, if $v$ is an isolated vertex in $G$, we simply create a trivial interval for $v$ aside the 2-interval representation of $G\setminus \{v\}$ satisfying all conditions in the claim.

The above proof implies a polynomial-time algorithm for the construction of
a 2-interval representation that satisfies all conditions stated in the claim of a graph $G$
of maximum degree 3 and a vertex $v$ of $G$. In general, the algorithm iteratively decomposes the graph $G$ based on the above cases on $v$. For subgraphs of constant
sizes in the decompositions, we directly calculate the
2-interval representations satisfying all conditions in the claim in polynomial-time. Then, we iteratively combine the 2-interval representations using the constructions in
the above cases to construct
the desired 2-interval representation of $G$.

Given a 2-interval representation $\mathcal{I}(G,v)$ as discussed above, a 2-interval representation of $G$ satisfying all conditions stated in Lemma~\ref{lemma:ids}
can be computed in polynomial-time, by adding some dummy intervals for degree-($\leq 2$) vertices.
\end{proof} 
\end{document}